\documentclass[a4paper,final]{llncs}

\pagestyle{plain}

\usepackage{amsmath,amssymb}
\usepackage[final]{graphicx}
\usepackage{algorithm}
\usepackage{algorithmic}
\usepackage{fullpage}

\bibliographystyle{alpha}

\title{Approximating Semi-Matchings in Streaming and in Two-Party Communication}
\author{Christian Konrad\inst{1} and Adi Ros\'en\inst{2} }
\institute{LIAFA, Universit\'{e} Paris Diderot, France. \\ \email{konrad@lri.fr} \and CNRS and Univerit\'e Paris Diderot - Paris 7, France. \\ \email{adiro@liafa.univ-paris-diderot.fr}}

\DeclareMathOperator*{\degmax}{deg\,max}
\DeclareMathOperator*{\argmin}{arg\,min}

\DeclareMathOperator*{\polylog}{polylog}

\newcommand{\semi}{\mathrm{semi}}
\newcommand{\isemi}{\mathrm{isemi}}
\newcommand{\ALGOincomplete}{\textsc{incomplete}}
\newcommand{\ALGOsemi}{\textsc{asemi}}
\newcommand{\Order}{\mathrm{O}}
\newcommand{\OrderT}{\tilde{\mathrm{O}}}

\begin{document}
\maketitle 
\begin{abstract}
We study the communication complexity and streaming complexity of approximating unweighted semi-matchings.
A semi-matching in a bipartite graph $G = (A, B, E)$, with $n = |A|$, is a subset of edges $S \subseteq E$ that matches
all $A$ vertices to $B$ vertices with the goal usually being to 
do this as fairly as possible. While the term \textit{semi-matching} was coined in 2003 by Harvey et al. [WADS 2003], 
the problem had already previously been studied in the scheduling literature under different names.

We present a deterministic one-pass streaming algorithm that for any $0 \le \epsilon \le 1$ uses space 
$\OrderT(n^{1+\epsilon})$ and computes an $\Order(n^{(1-\epsilon)/2})$-approximation to the semi-matching problem. 
Furthermore, with $\Order(\log n)$ passes it is possible to compute an $\Order(\log n)$-approximation with space $\OrderT(n)$.

In the one-way two-party communication setting, we show that
for every $\epsilon > 0$, deterministic communication protocols for computing 
an $\Order(n^{\frac{1}{(1+\epsilon)c + 1}})$-approximation require a message of size 
more than  $cn$ bits. We present two deterministic protocols communicating $n$ and $2n$ edges that compute an $\Order(\sqrt{n})$ 
and an $\Order(n^{1/3})$-approximation respectively.

Finally, we improve on results of Harvey et al. [Journal of Algorithms 2006] and prove new links between 
semi-matchings and matchings. 
While it was known that an optimal semi-matching contains a maximum matching, we show that there is a hierachical decomposition 
of an optimal semi-matching into maximum matchings. A similar result holds for semi-matchings that do not admit length-two
degree-minimizing paths.
\end{abstract}

\section{Introduction}

\textbf{Semi-Matchings.}
A \textit{matching} in an unweighted bipartite graph $G = (A, B, E)$ can be seen as a one-to-one assignment matching
the $A$ vertices to $B$ vertices. The usual aim
is to find a matching that leaves as few $A$ vertices without associations as possible. A \textit{semi-matching} 
is then an extension of a matching, in that it is required that \textit{all} $A$ vertices are matched to 
$B$ vertices. This, however, is generally not possible in an injective way, and therefore we now allow the matching of multiple 
$A$ vertices to the same $B$ vertex. Typical objectives here are to minimize the maximal number of $A$ vertices that are matched to the 
same $B$ vertex, or to optimize with respect to even stronger balancing constraints. 
The term 'semi-matching' was coined by \cite{hllt03} and also used in \cite{fln2010}, however, the problem had
already previously been intensely studied in the scheduling literature \cite{ecs73,h73,anr95,a03,ll04}. We stick to this
term since it nicely reflects the structural property of entirely matching one bipartition of the graph.

The most prominent
application of the semi-matching problem is that of assigning a set of unit-length jobs to a set of
identical machines with respect to assignment conditions expressed through edges between the two sets. 
The objective of minimizing the 
maximal number of jobs that a machine receives then corresponds to minimizing the \textit{makespan} of the scheduling
problem. Optimizing the cost function $\sum_{b \in B} \deg_S(b) (\deg_S(b) + 1) / 2$, where $\deg_S(b)$ denotes the number
of jobs that a machine $b$ receives in the semi-matching $S$, corresponds to minimizing the \textit{total completion time} of the jobs 
(optimizing with respect to this cost function automatically minimizes the maximal degree). 

It is well known that matchings are of maximal size if they do not admit \textit{augmenting paths} \cite{b57}. Augmenting
paths for matchings correspond to \textit{degree-minimizing paths} for semi-matchings. 
They first appeared in \cite{hllt03} under the name of \textit{cost-reducing-paths}, and they were used for the computation
of a semi-matching that minimizes a certain cost function. We use the term `degree-minimizing-path' since it is more appropriate 
in our setting. A degree-minimizing
path starts at a $B$ node of high degree, then alternates between edges of the semi-matching and edges outside
the semi-matching, and ends at another $B$ node of degree at least by two smaller than the degree of the 
starting point of the path. Flipping the
semi-matching and non-semi-matching edges of the path then generates a new semi-matching such that the large degree
of the start node of the path is decreased by $1$, and the small degree of the end node of the path is
increased by $1$. An \textit{optimal semi-matching} is defined in \cite{hllt03} to be one that does not admit any 
degree-minimizing paths.
It was shown in \cite{hllt03} that such a semi-matching is also optimal with respect to a large set of cost functions, 
including the minimization of the maximal degree as well as the minimization of the total completion time.
At present, the best existing algorithm for computing an optimal semi-matching \cite{fln2010} runs in time $\Order(\sqrt{|V|} |E| \log |V|)$ where
$V = A \cup B$. See \cite{fln2010} for a broader overview about previous work on semi-matchings (including works from the scheduling literature).

In this paper, we study \textit{approximation algorithms} for the semi-matching problem in different computational settings. 
The notion of approximation that we consider is with respect to the maximal degree: given a bipartite graph $G=(A, B, E)$ with
$n = |A|$, we are interested in computing a semi-matching $S$ such that $\degmax S \le c \cdot  \degmax S^*$, where $S^*$ denotes 
an optimal semi-matching, $\degmax$ denotes the maximal degree of a vertex w.r.t. a set of edges, and $c$ is the approximation factor. 
This notion of approximation corresponds to approximating the makespan
when the semi-matching is seen as a scheduling problem. This setting was already studied in e.g. \cite{anr95}.

\vspace{0.2cm}

\textbf{Streaming Algorithms and Communication Complexity.}
\textit{Streaming Algorithms} fall into the category of massive data set algorithms. In many applications, the data that an algorithm
is called upon to process is too large to fit into the computer's memory. In order to cope with this problem, a streaming algorithm sequentially scans 
the input while using a random access memory of size sublinear in the length of the input stream. Multiple passes often help to further decrease the size of the
random access memory. \textit{Graph streams} are widely studied in the streaming model, 
and in the last years matching problems 
have received particular attention \cite{ag11,gkk12,kmm12,k13}. A graph stream is a sequence of the edges of the input graph with a priori
no assumption on the order of the edges. Particular arrival orders of the edges are studied in the literature and allow the design of algorithms
that depend on that order. Besides uniform random order \cite{kmm12}, the \textit{vertex arrival order} \cite{gkk12,k13} of edges of a bipartite graph is 
studied where edges incident to the same $A$ node arrive in blocks. 
Deciding basic graph properties such as connectivity already requires $\Omega(|V|)$ space \cite{fkmsz05}, where $V$ denotes the vertex set of a graph.
Many works considering graph streams allow an algorithm to use $\Order(|V| \polylog |V|)$ space. This setting is usually called the \textit{semi-streaming} setting.

Space lower bounds for streaming algorithms are often obtained via \textit{Communication Complexity}. There is an inherent link between
streaming algorithms and one-way $k$-party communication protocols. A streaming algorithm for a problem $P$ with space $s$ also serves as
a one-way $k$-party communication protocol for $P$ with communication cost $\Order(s k)$. 
Conversely, a lower bound on the size of any message of such a protocol is also a lower bound on the space requirements of a 
streaming algorithm. 
Determining the communication complexity of problems is in itself an important task, however, the previously discussed link to
streaming algorithms provides an additional motivation.

\vspace{0.2cm}

\textbf{Our Contributions.} 
We initiate the study of the semi-matching problem in the streaming and the communication settings.
We present a deterministic one-pass streaming algorithm that for any $0 \le \epsilon \le 1$ uses space 
$\OrderT(n^{1+\epsilon})$ and computes an $\Order(n^{(1-\epsilon)/2})$ approximation to the semi-matching problem 
(\textbf{Theorem~\ref{theorem:one-pass-streaming}})\footnote{We write $\OrderT(n)$ to denote $\Order(n \polylog n)$.}.
Furthermore, we show that with $\Order(\log n)$ passes we can compute an $\Order(\log n)$
approximation with space $\OrderT(n)$ (\textbf{Theorem~\ref{theorem:log-n-approx}}).

In the two-party one-way communication setting, 
we show that for any $\epsilon > 0$,
deterministic communication protocols that compute an $\Order(n^{\frac{1}{(1+\epsilon)c + 1}})$ approximation to the semi-matching problem
require a message of size at least $cn$ bits (\textbf{Theorem~\ref{thm:communicatio-lb}}). We present two deterministic protocols communicating 
$n$ and $2n$ edges that compute an $\Order(\sqrt{n})$ approximation and an $\Order(n^{1/3})$ approximation, respectively (\textbf{Theorem~\ref{thm:communication-ub}}).

While it was known that an optimal 
semi-matching contains a maximum matching \cite{hllt03}, we show that there is
a hierarchical decomposition of an optimal semi-matching into maximum matchings (\textbf{Lemma~\ref{lemma:no-deg-min-path}}). 
Similarly, we show that semi-matchings that do not admit length-two degree-minimizing paths can be decomposed into 
maximal matchings (\textbf{Lemma~\ref{lemma:no-length-2-deg-min-path}}). The latter result allows us to prove that
the maximal degree of a semi-matching that does not admit a length-two degree-minimizing path is at most $\lceil \log(n + 1) \rceil$
times the maximal degree of an optimal semi-matching (\textbf{Theorem~\ref{theorem:log-n-approximation}}). 


\vspace{0.2cm}

\textbf{A semi-streaming algorithm for vertex arrival order.}
In \cite{anr95}, the semi-matching problem is studied in the online model (seen as a scheduling problem). 
In this model, the $A$ vertices arrive online together with their incident edges, and it has to be irrevocably decided to which $B$ node 
an $A$ node is matched. It is shown that the greedy algorithm 
matching an $A$ node to the $B$ node that currently has the smallest degree is $\lceil \log(n + 1) \rceil$ competitive,
and that this result is tight. 
This algorithm can also be seen as a one-pass $\lceil \log(n + 1) \rceil$ approximation semi-streaming 
algorithm (meaning $\OrderT(n)$ space) for the semi-matching problem when the input stream is in vertex arrival order. 
Note that our one-pass algorithm does not assume any order on the input sequence, and when allowing $\OrderT(n)$ space it 
achieves an $\Order(\sqrt{n})$-approximation.

\vspace{0.2cm}

\textbf{Techniques.} 
Our streaming algorithms are based on the following greedy algorithm. 
Fix a maximal degree 
$d_{\max}$ (for instance $d_{\max} = n^{1/4}$) and greedily add edges to a set $S_1$ such that the maximal degree of 
a $B$ node in $S_1$ does not exceed $d_{\max}$, and the degree of any $A$ node in $S_1$ is at most $1$. 
This algorithm leaves at most $\Order(n/d_{\max})$ $A$ vertices unmatched in $S_1$. 
To match the yet unmatched vertices, we use a second greedy algorithm that we run in parallel to the first one. We fix a parameter 
$d'$ appropriately (if $d_{\max} = n^{1/4}$ then we set $d' = n^{1/2}$) and for all vertices $a \in A$ we store arbitrary 
$d'$ edges incident to $a$ in a set $E'$. 
Then, we compute an optimal semi-matching $S_2$ of the unmatched vertices in $S_1$ and the $B$ nodes only considering the edges in $E'$. 
We prove that such a semi-matching has bounded maximal degree (if $d_{\max} = n^{1/4}$ and $d' = n^{1/2}$ then this degree is $n^{1/4}$). 
The set $S_1 \cup S_2$ is hence a semi-matching of maximal degree $d_{\max} + \degmax S_2$ 
and the space requirement of this algorithm is $\OrderT(n d')$. 
In Section~\ref{section:streaming} we generalize this idea for any $0 \le \epsilon \le 1$ to obtain one-pass algorithms with
approximation factors $\Order(n^{1/2(1-\epsilon)})$ using space $\OrderT(n^{1+\epsilon})$,
and a $\log(n)$-pass algorithm with approximation factor $\Order(\log n)$ using space $\OrderT(n)$.

In the two-party one-way communication setting, the edge set $E$ of a bipartite graph $G=(A, B, E)$ is split among two players,
Alice and Bob. 
Alice sends a message to Bob and Bob outputs a semi-matching of $G$.
Our communication upper bounds make use of what we call a \textit{$c$-semi-matching skeleton} (or simply $c$-skeleton). 
A $c$-skeleton of a bipartite graph $G = (A, B, E)$ is a subset of edges $S \subseteq E$ such that for any 
$A' \subseteq A: \degmax \semi(A', B, S) \le c \cdot \degmax \semi(A', B, E)$
where $\semi(A', B, E')$ denotes an optimal semi-matching between $A'$ and $B$ using edges in $E'$. 
We show that if Alice sends a $c$-skeleton $S$ of her subgraph to Bob, and Bob computes an optimal 
semi-matching using his edges and the skeleton, then the resulting semi-matching is a $c+1$ approximation. 
We show that there is an $\Order(\sqrt{n})$-skeleton consisting of $n$ edges, and that there is an 
$\Order(n^{1/3})$-skeleton consisting of $2n$ edges. 
It turns out that an optimal semi-matching is an $\Order(\sqrt{n})$-skeleton, and we show how an $\Order(n^{1/2})$-skeleton
can be improved to an $\Order(n^{1/3})$-skeleton by adding additional $n$ edges.
These skeletons are almost optimal: 
we show that for any $\epsilon > 0$, an $\Order(n^{\frac{1}{(1+\epsilon)c+1}})$-skeleton has at least $cn$ edges.
Inspired by the prior lower bound, we prove that for any $\epsilon > 0$, the deterministic one-way two-party communication complexity 
of approximating semi-matchings within a factor $\Order(n^{\frac{1}{(1+\epsilon)c+1}})$ is at least $cn$ bits.



In order to prove our structure lemmas on semi-matchings, we make use of degree-minimizing paths. Our results on the decomposition
of semi-matchings into maximum and maximal matchings directly relate the absence of degree-minimizing paths to the absence of
augmenting paths in matchings. See Section~\ref{section:structure} for details. 


\vspace{0.2cm}

\textbf{Organization.} After presenting notations and definitions in Section~\ref{section:preliminaries},
we present our streaming algorithms in Section~\ref{section:streaming}. We then discuss the one-way two-party communication setting in
Section~\ref{section:comm}. We conclude with Section~\ref{section:structure}, where we present our results on the structure of semi-matchings.

\section{Notations and Definitions} \label{section:preliminaries}
Let $G = (A, B, E)$ be a bipartite graph and let $n = |A|$.
For ease of presentation, we assume that $|B|$ is upper-bounded by a polynomial in $n$.  
Let $e \in E$ be an edge connecting nodes $a \in A$ and $b \in B$. Then, we 
write $A(e)$ to denote the vertex $a$, $B(e)$ to denote the vertex $b$, and $ab$ to denote $e$. Furthermore, for a subset $E' \subseteq E$, we define
$A(E') = \bigcup_{e\in E'} A(e)$ (respectively $B(E')$). For subsets $A' \subseteq A$ and $B' \subseteq B$ 
we write $E'|_{A' \times B'}$ to denote the subset of edges of $E'$ whose endpoints are all in $A' \cup B'$. We 
denote by $E'(a)$ the set of edges of $E' \subseteq E$ that have an endpoint in vertex $a$, and $E'(A')$ the set of edges that
have endpoints in vertices of $A'$, where $A' \subseteq A$ (similarly we define $E'(B')$ for $B' \subseteq B$).

For a node $v \in A \cup B$, the \textit{neighborhood} of $v$ is the set of nodes that are connected to $v$ and we denote it 
by $\Gamma(v)$. For a subset $E' \subseteq E$, we write $\Gamma_{E'}(v)$ to denote the neighborhood of $v$
in the graph induced by $E'$. Note that by this definition $\Gamma(v) = \Gamma_E(v)$.
For a subset $E' \subseteq E$, we denote by $\deg_{E'}(v)$ the \textit{degree} in $E'$ of a node $v \in V$, 
which is the number of edges of $E'$ with an endpoint in $v$. We define $\degmax E' := \max_{v \in A \cup B} \deg_{E'}(v)$. 

\vspace{0.1cm}

\textbf{Matchings.} A \textit{matching} is a subset $M \subseteq E$ such that $\forall v \in A \cup B: \deg_M(v) \le 1$.
A \textit{maximal matching} is a matching that is inclusion-wise maximal, i.e. it can not be enlarged by 
adding another edge of $E$ to it.
A \textit{maximum matching} is a matching of maximal size. 
A \textit{length $p$ augmenting path} ($p \ge 3$, $p$ odd) wrt. a matching $M$ is a path $P = (v_1, \dots, v_{p+1})$ such that
$v_1, v_{p+1} \notin A(M) \cup B(M)$ and for $i \le 1/2(p-1): v_{2i}v_{2i+1} \in M$, and $v_{2i-1}v_{2i} \notin M$. 

\vspace{0.1cm}

\textbf{Semi-Matchings.} A \textit{semi-matching} of $G$ is a subset $S \subseteq E$ such that $\forall a \in A: \deg_S(a) = 1$.
A \textit{degree-minimizing path} $P = (b_1, a_1, \dots, b_{k-1}, a_{k-1}, b_k)$ with respect to a
semi-matching $S$ is a path of length $2k$ ($k \ge 1$) such that for all $i \le k:$ $(a_i, b_i) \in S$,
for all $i \le k-1: (a_i, b_{i+1}) \notin S$, and 
$\deg_S(b_1) > \deg_S(b_2) \ge \deg_S(b_3) \ge \dots \ge \deg(b_{k-1}) > \deg(b_k)$.
An \textit{optimal semi-matching} $S^* \subseteq E$ is a semi-matching that does not admit any degree-minimizing-paths.
For subsets $A' \subseteq A, B' \subseteq B, E' \subseteq E$, we denote by $\semi(A', B', E')$ an optimal semi-matching
in the graph $G' = (A', B', E')$, and we denote by $\semi_2(A', B', E')$ a semi-matching
that does not admit degree-minimizing paths of length $2$ in $G'$.

\vspace{0.1cm}

\textbf{Incomplete $d$-bounded Semi-Matchings.} Let $d$ be an integer. Then an \textit{incomplete $d$-bounded semi-matching} of $G$ is a subset 
$S \subseteq E$ such that $\forall a \in A: \deg_S(a) \le 1$ and $\forall b \in B: \deg_S(b) \le d$. 
For subsets $A' \subseteq A, B' \subseteq B, E' \subseteq E$, we write $\isemi_{d}(A', B', E')$ to denote an incomplete 
$d$-bounded semi-matching of maximal size in the graph $G' = (A', B', E')$. 

\vspace{0.1cm}

\textbf{Approximation.} We say that an algorithm (or communication protocol) is a $c$-approximation algorithm (resp. communication protocol) 
to the semi-matching problem if it outputs a semi-matching $S$ such that $\degmax S \le c \cdot \degmax S^*$, where $S^*$ denotes an optimal semi-matching. We note that this measure was previously 
used for approximating semi-matching, e.g, in \cite{anr95}.

\section{Streaming Algorithms} \label{section:streaming}

To present our streaming algorithms, we
describe an algorithm, $\ALGOsemi(G, s, d, p)$ (Algorithm~\ref{algo:streaming}), that computes an 
incomplete $2dp$-bounded semi-matching in the 
graph $G$ using space $\OrderT(s)$, and makes at most $p \ge 1$ passes over the input stream. If appropriate parameters are 
chosen, then the output is not only an incomplete semi-matching, but also a semi-matching. 
We run multiple copies of this algorithm with different parameters in parallel
in order to obtain a one-pass algorithm for the semi-matching problem (Theorem~\ref{theorem:one-pass-streaming}).
Using other parameters, we also obtain 
a $\log(n)$-pass algorithm, as stated in Theorem~\ref{theorem:log-n-approx}.  

\begin{algorithm}[ht]
 \caption{Skeleton for approximating semi-matchings: $\ALGOsemi(G, s, d, p)$ \label{algo:streaming}}
 \begin{algorithmic}
 \REQUIRE $G = (A, B, E)$ is a bipartite graph
 \STATE $S \gets \varnothing$
 \STATE \textbf{repeat} at most $p$ times or until $|A(S)| = |A|$
  \STATE $\quad S \gets S \cup \ALGOincomplete(G|_{(A \setminus A(S)) \times B}, s, d)$
 \STATE \textbf{end repeat}
 \RETURN $S$
 \end{algorithmic}
\end{algorithm} 



\begin{algorithm}[ht]
\caption{Computing incomplete semi-matchings: $\ALGOincomplete(G, s, d)$ \label{algo:streaming-subroutine}}
 \begin{algorithmic}
 \REQUIRE $G = (A, B, E)$ is a bipartite graph
 \STATE $k \gets s/|A|$, $S_1 \gets \varnothing$, $E' \gets \varnothing$
 \WHILE{$\exists$ an edge $ab$ in stream}
 \STATE \textbf{if} $ab \notin A \times B$ \textbf{then} \textbf{continue} 
 \STATE \textbf{if} $\deg_{S_1}(a) = 0$ and $\deg_{S_1}(b) < d$ \textbf{then} $S_1 \gets S_1 \cup \{ ab \}$ 
 \STATE \textbf{if} $\deg_{E'}(a) < k$ \textbf{then} $E' \gets E' \cup \{ ab \}$ 
 \ENDWHILE
 \STATE $S_2 \gets \isemi_d(E'|_{(A \setminus A(S_1)) \times B})$
 \STATE $S \gets S_1 \cup S_2$
 \RETURN $S$
 \end{algorithmic}
\end{algorithm}


$\ALGOsemi(G, s, d, p)$ starts with an empty incomplete semi-matching $S$ and adds edges to $S$ by invoking 
$\ALGOincomplete(G, s, d)$ (Algorithm~\ref{algo:streaming-subroutine}) on the subgraph of the 
as yet unmatched $A$ vertices in $S$ and all $B$ vertices. Each invocation of $\ALGOincomplete(G, s, d)$ makes one pass over the input stream and returns
a $2d$-bounded incomplete semi-matching while using space $\OrderT(s)$. Since we make at most $p$ passes, the resulting incomplete
semi-matching has a maximal degree of at most $2dp$.

$\ALGOincomplete(G, s, d)$ collects edges greedily from graph $G$ and puts them into an incomplete $d$-bounded semi-matching $S_1$ and a set $E'$. 
An edge $e$ from the input stream is put into $S_1$ if $S_1 \cup \{e \}$ is still an incomplete $d$-bounded semi-matching. 
An edge $e=ab$ is added to $E'$ if the degree of $a$ in $E' \cup \{e \}$
is less or equal to a parameter $k$ which is chosen to be $s/|A|$ in order to ensure that the algorithm does not exceed space $\OrderT(s)$.
The algorithm returns an incomplete $2d$-bounded semi-matching that consists of $S_1$ and $S_2$, where
$S_2$ is an optimal incomplete $d$-bounded semi-matching between the $A$ vertices that are not matched in $S_1$ and all $B$ vertices, 
using only edges in $E'$. 

We lower-bound the size of $S_2$ in Lemma~\ref{lemma:d-reg-graph-semi}. We prove that
for any bipartite graph $G= (A, B, E)$ and any $k > 0$, if we store for each $a \in A$ any $\max \{k, \deg_G(a) \}$ 
incident edges to $a$, then we can compute an incomplete $d$-bounded semi-matching of size at least $\min \{k d, |A| \}$
using only those edges, where $d$ is an upper-bound on the maximal degree of an optimal semi-matching between $A$ and $B$ in $G$.

Lemma~\ref{lemma:d-reg-graph-semi} is then used in
the proof of Lemma~\ref{lemma:isemi}, where we show a lower bound on the size of the output $S_1 \cup S_2$ of $\ALGOincomplete(G, s, d)$.

\begin{lemma} \label{lemma:d-reg-graph-semi}
 Let $G = (A, B, E)$ be a bipartite graph, let $k > 0$ and let $d \ge \degmax \semi(A, B, E)$. 
Furthermore, let $E' \subseteq E$ be a subset of edges such that for all $a \in A: \deg_{E'}(a) = \min \{ k, \deg_E(a) \}$. Then there is an incomplete 
$d$-bounded semi-matching $S \subseteq E'$ such that $|S| \ge \min \{k d, |A| \}$.
\end{lemma}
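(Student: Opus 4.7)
The plan is to reduce the statement to a defect form of Hall's theorem for bipartite $b$-matchings and then perform a short case split on $A'$ that uses the hypothesis $d \ge \degmax \semi(A, B, E)$ in only one case.

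First I would recall that maximizing an incomplete $d$-bounded semi-matching in $(A, B, E')$ is exactly the bipartite $b$-matching problem in which each $a \in A$ has capacity $1$ and each $b \in B$ has capacity $d$. Letting $\nu$ denote this maximum, replacing every $b \in B$ by $d$ parallel copies and applying the standard defect Hall theorem yields
\[
\nu \;=\; \min_{A' \subseteq A} \bigl( |A \setminus A'| + d\,|N_{E'}(A')| \bigr).
\]
It therefore suffices to show that, for every $A' \subseteq A$, $|A \setminus A'| + d\,|N_{E'}(A')| \ge \min\{kd, |A|\}$.

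Next, I would split $A' = A'_{\mathrm{low}} \cup A'_{\mathrm{high}}$ according to whether $\deg_E(a) < k$ or $\deg_E(a) \ge k$. Two observations drive the argument: (i) for $a \in A'_{\mathrm{low}}$ the assumption $\deg_{E'}(a) = \min\{k, \deg_E(a)\}$ forces $\deg_{E'}(a) = \deg_E(a)$, so $N_{E'}(A'_{\mathrm{low}}) = N_E(A'_{\mathrm{low}})$; (ii) any single $a \in A'_{\mathrm{high}}$ satisfies $\deg_{E'}(a) = k$, whence $|N_{E'}(A')| \ge k$.

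If $A'_{\mathrm{high}} \neq \emptyset$, observation (ii) yields $d\,|N_{E'}(A')| \ge kd$ and the required inequality follows. Otherwise $A' = A'_{\mathrm{low}}$, so by (i) we have $N_{E'}(A') = N_E(A')$. Here the hypothesis $d \ge \degmax \semi(A, B, E)$ enters: the restriction to $A'$ of an optimal semi-matching of $(A,B,E)$ matches every vertex of $A'$ into $B$ with maximum $B$-degree at most $d$, so $|N_E(A')| \ge |A'|/d$, and hence $|A \setminus A'| + d\,|N_{E'}(A')| \ge |A \setminus A'| + |A'| = |A|$, again establishing the inequality. The main obstacle is essentially just recognizing the correct Hall-type reduction; once it is in place, the case split is mechanical, and the semi-matching hypothesis enters in the natural way, through the bound $|N_E(A')| \ge |A'|/d$ in the $A'_{\mathrm{low}}$ case.
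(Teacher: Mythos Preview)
Your proof is correct, and it takes a genuinely different route from the paper's. The paper argues constructively: it sets $A_0 = \{a : \deg_{E'}(a) = \deg_E(a)\}$, starts with $S_0 = \semi(A_0, B, E)$ (which has maximum $B$-degree at most $d^* \le d$ because $E'|_{A_0 \times B} = E|_{A_0 \times B}$), and then greedily extends $S_0$ by edges of $E'|_{(A\setminus A_0)\times B}$ while keeping the $d$-bound. If the process stops with some $a \in A\setminus A_0$ unmatched, all $k$ of its $E'$-neighbours are saturated at degree $d$, so $|S| \ge kd$; otherwise $|S| = |A|$.

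Your argument replaces this greedy construction by the Ore/defect form of Hall's theorem for $b$-matchings, obtained by cloning each $b\in B$ into $d$ copies, and then does a case split on the minimizing $A'$. The low/high split on $A'$ mirrors the paper's $A_0/A_1$ split, and the single place where the hypothesis $d \ge \degmax\semi(A,B,E)$ enters---the bound $|N_E(A')| \ge |A'|/d$ in the low case---is exactly the analogue of the paper's use of $\semi(A_0,B,E)$. What you gain is a clean min--max formulation that makes the threshold $\min\{kd,|A|\}$ transparent; what the paper's version gains is self-containment (no appeal to Hall/K\"onig) and an explicit construction, which is natural in an algorithmic paper.
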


\begin{proof}
 Let $d^* = \degmax \semi(A, B, E)$.
 We explicitly construct an incomplete semi-matching $S$. Let $A_0 \subseteq A$ such that for all $a \in A_0: \deg_{E'}(a) = \deg_E(a)$, 
and let $A_1 = A \setminus A_0$. Let $S_0 = \semi(A_0, B, E)$. Clearly, $\degmax S_0 \le d^*$. We construct now $S$ as follows. 

Start with $S = S_0$, and then add greedily edges in any order from $E'|_{A_1 \times B}$ to $S$ such that $S$ remains an incomplete semi-matching 
with maximal degree $d$. Stop as soon as there is no further edge that can be added to $S$. 

We prove that $S$ contains at least 
$\min \{k d, |A| \}$ edges. To see this, either all nodes of $A$ are matched in $S$, or there is at least one node $ a \in A_1$ that is not matched in $S$ 
(note that all nodes in $A_0$ are matched in $S$). Since $\deg_{E'}(a) = k$, all nodes $b \in \Gamma_{E'}(a)$ have degree $d$ since otherwise $a$ 
would have been added to $S$. This implies that there are at least $k \cdot d$ nodes matched in $S$ which proves the lemma. \qed
\end{proof}

\begin{lemma} \label{lemma:isemi}
 Let $G = (A, B, E)$ be a bipartite graph, let $s \ge |A|$ and let $d \ge \degmax \semi(A, B, E)$. 
Then $\ALGOincomplete(G, s, d)$ (see Algorithm~\ref{algo:streaming-subroutine}) uses $\OrderT(s)$ space and outputs an incomplete $2d$-bounded 
semi-matching $S$ such that $|S| \ge \min \{|A| \frac{d}{d+d^*} + \frac{ds}{|A|}, |A| \}$.

\end{lemma}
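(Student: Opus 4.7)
The plan is to analyze $\ALGOincomplete(G,s,d)$ in three stages: space, feasibility of $S_1 \cup S_2$ as a $2d$-bounded incomplete semi-matching, and the size lower bound (with $d^* = \degmax \semi(A,B,E) \le d$). The space bound is immediate from the guard $\deg_{E'}(a) < k$ with $k=s/|A|$: this forces $|E'|\le k|A|=s$, while $|S_1|\le|A|\le s$, so the whole state fits in $\OrderT(s)$ memory. Feasibility is also immediate: $S_1$ is an incomplete $d$-bounded semi-matching by the greedy rule, and $S_2=\isemi_d$ uses only edges incident to $A\setminus A(S_1)$, so each $a\in A$ is covered by at most one of $S_1,S_2$ and each $b\in B$ has total degree at most $2d$.

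The main work is the size bound. Let $A'=A\setminus A(S_1)$. I would first establish a \emph{greedy saturation} property: for every $a\in A'$ and every $b\in \Gamma(a)$, $\deg_{S_1}(b)=d$. Indeed, otherwise at the moment the edge $ab$ was seen in the stream, both conditions $\deg_{S_1}(a)=0$ and $\deg_{S_1}(b)<d$ were met, so $ab$ would have been added to $S_1$, contradicting $a\in A'$. This yields $|S_1|\ge d\cdot|\Gamma(A')|$. Second, restricting an optimal semi-matching of $(A,B,E)$ to $A'$ gives a semi-matching of $(A',B,E|_{A'\times B})$ of max degree at most $d^*$, which forces $|\Gamma(A')|\ge |A'|/d^*$. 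Combining these two with $|A'|=|A|-|S_1|$ gives the self-referencing inequality
\[
|S_1|\ \ge\ \frac{d(|A|-|S_1|)}{d^*},\qquad\text{hence}\qquad |S_1|\ \ge\ \frac{d\,|A|}{d+d^*}.
\]

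Next, I would bound $|S_2|$ via Lemma~\ref{lemma:d-reg-graph-semi} applied to the graph $(A',B,E|_{A'\times B})$, the edge subset $E'|_{A'\times B}$ (which still satisfies $\deg_{E'}(a)=\min\{k,\deg_E(a)\}$ for every $a\in A'$), and the threshold $d\ge \degmax\semi(A',B,E|_{A'\times B})$ (valid since the latter is at most $d^*$). This yields $|S_2|\ge\min\{kd,|A'|\}=\min\{ds/|A|,\,|A'|\}$. If the minimum equals $|A'|$, then $S_1\cup S_2$ matches all of $A$ and $|S_1\cup S_2|=|A|$; otherwise, adding the two lower bounds gives $|S_1\cup S_2|\ge \frac{d|A|}{d+d^*}+\frac{ds}{|A|}$, which is the announced bound.

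The main obstacle is not any single ingredient but the correct coupling between $|S_1|$ and $|A'|$: the greedy saturation gives $|S_1|$ in terms of $|\Gamma(A')|$, while the hypothetical optimal semi-matching relates $|\Gamma(A')|$ back to $|A'|=|A|-|S_1|$, producing a self-referencing inequality whose resolution is precisely what yields the $d/(d+d^*)$ factor. Once this is in place, invoking Lemma~\ref{lemma:d-reg-graph-semi} for $S_2$ is routine.
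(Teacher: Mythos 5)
Your proof is correct and follows essentially the same route as the paper's: greedy saturation of the $B$-neighbors of unmatched $A'$ vertices, the expansion bound $|\Gamma(A')|\ge|A'|/d^*$ to close the self-referencing inequality and get $|S_1|\ge d|A|/(d+d^*)$, and then Lemma~\ref{lemma:d-reg-graph-semi} applied to $E'|_{A'\times B}$ for $|S_2|$. The only cosmetic difference is that you argue saturation for every $b\in\Gamma(A')$ whereas the paper uses only the $S^*$-partners $B(S^*(A'))$, but this is immaterial since both yield the same bound.
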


\begin{proof}
 The proof refers to the variables of Algorithm~\ref{algo:streaming-subroutine} and the values they take at the end of the algorithm. 
Furthermore, let $S^* = \semi(A, B, E)$, $d^* = \degmax S^*$, and let $A' = A \setminus A(S_1)$.

Firstly, we lower-bound $|S_1|$. Let $a \in A'$ and $b = S^*(a)$. 
Then $\deg_{S_1}(b) = d$ since otherwise $a$ would have been matched in $S_1$. Hence, we obtain 
$|A(S_1)| \ge d |B(S^*(A'))| \ge d |A'|/d^*$, where
the second inequality holds since the maximal degree in $S^*$ is $d^*$. Furthermore, since $A' = A \setminus A(S_1)$ and 
$|S_1| = |A(S_1)|$, we obtain $|S_1| \ge |A| \frac{d}{d+d^*}$.
We apply Lemma~\ref{lemma:d-reg-graph-semi} on the graph induced by the edge set $E'|_{A' \times B}$. We obtain that 
$|S_2| \ge \min \{d s/|A|, |A'| \}$ and consequently $|S| = |S_1| + |S_2| \ge \min \{|A| \frac{d}{d+d^*} + \frac{ds}{|A|}, |A| \}$.

Concerning space, the dominating factor is the storage space for the at most $k+1$ edges per $A$ vertex, and
hence space is bounded by $\OrderT(k |A|) = \OrderT(s)$. \qed
\end{proof}

In the proof 
of Theorem~\ref{theorem:one-pass-streaming}, for $0 \le \epsilon \le 1$ we show that
$\ALGOsemi(G$, $n^{1+\epsilon}$, $n^{1/2(1-\epsilon)} d'$, $1)$ returns a semi-matching if
$d'$ is at least the maximal degree of an optimal semi-matching.
Using a standard technique, we run $\log(n)+1$ copies of $\ALGOsemi$
for all $d' = 2^i$ with $0 \le i \le \log(n)$
and we return the best semi-matching, obtaining a $1$-pass algorithm.
We use the same idea in Theorem~\ref{theorem:log-n-approx}, where we obtain a $4\log(n)$ approximation algorithm
that makes $\log(n)$ passes and uses space $\OrderT(n)$.

\begin{theorem} \label{theorem:one-pass-streaming}
 Let $G = (A, B, E)$ be a bipartite graph with $n = |A|$. For any $0 \le \epsilon \le 1$ there is a one-pass streaming algorithm 
using $\OrderT(n^{1+\epsilon})$ space that computes a $4 n^{1/2(1-\epsilon)}$ approximation to the semi-matching problem.
\end{theorem}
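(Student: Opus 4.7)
The plan is to run $O(\log n)$ copies of $\ALGOsemi$ in parallel, each with a different guess for the maximal degree $d^*$ of an optimal semi-matching, and return the best semi-matching produced. Specifically, since $d^* \le n$, I would run, for every $i \in \{0, 1, \ldots, \lceil \log n \rceil\}$, an independent instance of $\ALGOsemi(G, n^{1+\epsilon}, n^{(1-\epsilon)/2} \cdot 2^i, 1)$, all on the single input pass. At the end I would output, among those copies whose output happens to be a complete semi-matching, the one with smallest maximal degree.

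The correctness argument reduces to a single calculation based on Lemma~\ref{lemma:isemi}. Fix the instance corresponding to the smallest $i^*$ with $2^{i^*} \ge d^*$; then $d' := 2^{i^*}$ satisfies $d^* \le d' \le 2d^*$. Setting $s = n^{1+\epsilon}$ and $d = n^{(1-\epsilon)/2} d'$, Lemma~\ref{lemma:isemi} gives an incomplete $2d$-bounded semi-matching $S$ with
\[
|S| \ge \min\Bigl\{ n \cdot \tfrac{d}{d+d^*} + \tfrac{d s}{n},\; n \Bigr\}.
\]
I would verify that the first term is at least $n$, i.e., that $\frac{d s}{n} \ge n \cdot \frac{d^*}{d+d^*}$. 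Plugging in the values, the left-hand side equals $n^{(1+\epsilon)/2} d'$, while the right-hand side is at most $\frac{n d^*}{d} \le \frac{n d'}{n^{(1-\epsilon)/2} d'} = n^{(1+\epsilon)/2}$ (using $d^* \le d'$). Hence $|S| = n$, so this instance returns a full semi-matching of maximal degree at most $2d = 2 n^{(1-\epsilon)/2} d' \le 4 n^{(1-\epsilon)/2} d^*$, giving the claimed approximation ratio.

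For the space bound, each call to $\ALGOincomplete$ inside $\ALGOsemi$ uses $\OrderT(s) = \OrderT(n^{1+\epsilon})$ space by Lemma~\ref{lemma:isemi}, and the $O(\log n)$ parallel copies multiply this by only a polylogarithmic factor, which is absorbed into the $\OrderT(\cdot)$. The post-processing computation of $\isemi_d$ on the stored edges is done in memory after the stream has been consumed, so it does not affect streaming space.

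I do not expect a real obstacle: the entire theorem is essentially a driver for Lemma~\ref{lemma:isemi}. The only mildly subtle point is to ensure that the geometric guessing of $d'$ is legitimate despite not knowing $d^*$ in advance, which is standard (run all $O(\log n)$ guesses in parallel and pick the best valid output), and to check the case $d^* = 0$ separately (where $E = \varnothing$ and the claim is trivial).
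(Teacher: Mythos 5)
Your proof takes essentially the same route as the paper: run $\lceil\log n\rceil+1$ parallel copies of $\ALGOsemi(G, n^{1+\epsilon}, n^{(1-\epsilon)/2}2^i, 1)$ to geometrically guess $d^*$, select the best complete semi-matching output, and apply Lemma~\ref{lemma:isemi} to the copy with $d^*\le 2^{i^*}<2d^*$ to conclude it is complete with maximal degree $\le 2d\le 4n^{(1-\epsilon)/2}d^*$. You spell out the size calculation $\frac{ds}{n}\ge n\frac{d^*}{d+d^*}$ that the paper leaves implicit, but the argument and the invocation of Lemma~\ref{lemma:isemi} are identical.
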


\begin{proof}
  We run $\log(n) + 1$ copies of Algorithm~\ref{algo:streaming} in parallel as follows. 
For $0 \le i \le \lceil \log(n) \rceil$ let $S_i = \ALGOsemi(G, n^{1+\epsilon}, n^{1/2(1-\epsilon)} 2^i, 1)$ and choose among the $S_i$
a semi-matching $S_k$ such that $|S_k| = n$ and for any other $S_l$ with $|S_l| = n: \degmax S_k \le \degmax S_l$.

We show now that there is a $S_j$ which is a semi-matching that fulfills the desired approximation guarantee. Let $S^* = \semi(A, B, E)$ and $d^* = \degmax(S^*)$. 
Then define $j$ to be such that $d^* \le 2^j < 2d^*$ and let $d = n^{1/2(1-\epsilon)} 2^j$. $S_j$ is the output of a call to $\ALGOincomplete(G, n^{1+\epsilon}, d)$. By Lemma~\ref{lemma:isemi}, $S_j$ is of size at least $\min \{n \frac{d}{d+d^*} + dn^{\epsilon}, |A| \}$ which equals $|A|$ for our choice of $d$. This proves that all $a \in A$ are matched in $S_j$. By Lemma~\ref{lemma:isemi}, $\degmax S_j \le 2 d$ which is less or equal to $4 n^{1/2(1-\epsilon)} d^*$. Hence, $S_j$ is a $4 n^{1/2(1-\epsilon)}$ approximation. 

The space requirement is $\log n$ times the space requirement for the computation of a single $S_i$ which is dominated by the space 
requirements of Algorithm~\ref{algo:streaming-subroutine}. By Lemma~\ref{lemma:isemi}, this is $\OrderT(n^{1+\epsilon})$, and hence the 
algorithm requires $\OrderT(n^{1+\epsilon} \log n) = \OrderT(n^{1+\epsilon})$ space. \qed
\end{proof}

\begin{theorem} \label{theorem:log-n-approx}
Let $G = (A, B, E)$ be a bipartite graph with $n = |A|$. There is a $\log(n)$-pass streaming algorithm using space $\OrderT(n)$ that computes a $4 \log(n)$ approximation to the semi-matching problem.
\end{theorem}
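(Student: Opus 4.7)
The plan is to mimic the proof of Theorem~\ref{theorem:one-pass-streaming} but with parameters tuned for $\epsilon = 0$ and with $p = \lceil \log n \rceil$ passes. In parallel, for each $0 \le i \le \lceil \log n \rceil$ I run $S_i = \ALGOsemi(G, n, 2^i, \lceil \log n \rceil)$, and at the end return the $S_i$ that is a complete semi-matching of smallest maximum degree. The ``correct'' copy will be the one whose guess $2^i$ satisfies $d^* \le 2^i < 2d^*$, where $d^* = \degmax \semi(A, B, E)$.

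The heart of the argument is to show that this copy exhausts $A$ within $\lceil \log n \rceil$ passes. Let $A_t$ denote the set of as-yet-unmatched $A$-vertices at the start of pass $t+1$ of this copy. Since restricting $S^*$ to the edges covering $A_t$ produces a semi-matching on $(A_t, B, E|_{A_t \times B})$ of maximum degree at most $d^* \le 2^i$, Lemma~\ref{lemma:isemi} applies to the call $\ALGOincomplete(G|_{A_t \times B}, n, 2^i)$ and guarantees that the incomplete $2 \cdot 2^i$-bounded semi-matching produced in this pass has size at least
\[
\min \Bigl\{ |A_t| \cdot \tfrac{2^i}{2^i + d^*} + \tfrac{2^i \cdot n}{|A_t|}, \; |A_t| \Bigr\} \;\ge\; \min \Bigl\{ \tfrac{|A_t|}{2} + 2^i, \; |A_t| \Bigr\}.
\]
Hence either $A_t$ is matched completely in this pass, or $|A_{t+1}| \le |A_t|/2$. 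Starting from $|A_0| = n$, the geometric decay forces $A_t = \varnothing$ within $\lceil \log n \rceil$ passes, so $S_i$ is indeed a (complete) semi-matching. Its accumulated maximum degree is at most $2 \cdot 2^i \cdot \lceil \log n \rceil < 4 d^* \log n$, giving the claimed $4 \log n$ approximation.

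For the space bound, a single invocation of $\ALGOincomplete$ uses $\OrderT(n)$ space by Lemma~\ref{lemma:isemi}, so a single run of $\ALGOsemi$ does too, and the $\lceil \log n \rceil + 1$ parallel copies cost $\OrderT(n \log n) = \OrderT(n)$ in total. The only subtle point is the monotonicity observation that lets Lemma~\ref{lemma:isemi} apply to every pass rather than just the first; this is immediate because the restriction of any semi-matching of $(A, B, E)$ to the edges incident to $A_t \subseteq A$ is a semi-matching of $(A_t, B, E|_{A_t \times B})$ of no larger maximum degree. Once that is in place, the rest is a straightforward substitution of parameters into Lemma~\ref{lemma:isemi}.
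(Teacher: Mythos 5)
Your proof is correct and follows essentially the same approach as the paper: run $\lceil\log n\rceil+1$ parallel copies of $\ALGOsemi$ with geometrically spaced degree guesses $2^i$, isolate the copy with $d^* \le 2^i < 2d^*$, and use Lemma~\ref{lemma:isemi} to show that each pass matches at least half of the remaining $A$-vertices, yielding completion within $\lceil\log n\rceil$ passes and the stated degree and space bounds. The only real difference is that you make explicit the monotonicity observation (that the residual graph still has optimal semi-matching degree at most $d^*$, so the lemma applies in every pass), which the paper leaves implicit; this is a fair clarification, not a different argument.
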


\begin{proof}
 As in the proof of Theorem~\ref{theorem:one-pass-streaming}, we run $\log(n) + 1$ copies of Algorithm~\ref{algo:streaming} in parallel.
For $0 \le i \le \lceil \log(n) \rceil$ let $S_i = \ALGOsemi(G, n, 2^i, \log(n))$ and choose among the $S_i$
a semi-matching $S_k$ such that $|S_k| = n$ and for any other $S_l$ with $|S_l| = n: \degmax S_k \le \degmax S_l$.

We show now that there is a $S_j$ which is a semi-matching that fulfills the desired approximation guarantee. Let $S^* = \semi(A, B, E)$ and $d^* = \degmax(S^*)$. Then define $j$ to be such that $d^* \le 2^j < 2d^*$ and let $d = 2^j$. 
$S_j$ is the output of a call to $\ALGOsemi(G, n, d, \log(n))$. In each iteration, the algorithm calls 
$\ALGOincomplete(G', n, d)$, where $G'$ is the subgraph of $G$ of the not yet matched $A$ vertices and the $B$ vertices.
By Lemma~\ref{lemma:isemi}, at least a $\frac{d}{d+d^*} \ge 1/2$ fraction of the unmatched $A$ vertices is matched since
$d \ge d^*$, and the maximal degree of the incomplete semi-matching returned by $\ALGOincomplete(G', n, d)$ is at most $2d$. 
Hence, after $\log(n)$ iterations, all $A$ vertices are matched. Since $d < 2d^*$ and the algorithm performs at 
most $\log(n)$ iterations, the algorithm returns a $4 \log(n)$ approximation.  

Each copy of Algorithm~\ref{algo:streaming} uses space $\OrderT(n)$ and since we run $\Order(\log n)$ the required
space is $\OrderT(n)$. \qed
\end{proof}

\section{Two-party Communication Complexity} \label{section:comm}

We now consider one-way two-party protocols which are given a bipartite graph $G=(A, B,E)$ as input,
such that  $E_1 \subseteq E$ is given to Alice and  $E_2 \subseteq E$ is given to Bob. 
Alice sends a single message to Bob, and Bob outputs
a valid semi-matching $S$ for $G$.
A central idea for our upper and lower bounds is what 
we call a \textit{$c$-semi-matching skeleton} (or \textit{$c$-skeleton}). Given a bipartite graph $G=(A, B, E)$,
we define a $c$-semi-matching skeleton to be a subset of edges $S \subseteq E$ such that 
$\forall A' \subseteq A: \degmax \semi(A', B, S) \le c \cdot  \degmax \semi(A', B, E)$.
We show how to construct an $\Order(\sqrt{n})$-skeleton of size $n$, and an $\Order(n^{1/3})$-skeleton of
size $2n$. We show that if Alice sends a $c$-skeleton of her subgraph $G=(A, B, E_1)$ to Bob, then Bob can 
output a $c+1$-approximation to the semi-matching problem. Using our skeletons, we thus obtain 
one-way two party communication protocols for the semi-matching problem with approximation factors
$\Order(\sqrt{n})$ and $\Order(n^{1/3})$, respectively (Theorem~\ref{thm:communication-ub}).
Then we show that for any $\epsilon > 0$, an $\Order(n^{\frac{1}{(1+\epsilon)c + 1}})$-skeleton 
requires at least $cn$ edges. This renders our $\Order(\sqrt{n})$-skeleton and our $\Order(n^{1/3})$-skeleton tight up to a constant.

\subsection{Upper Bound} 
Firstly, we discuss the construction of two skeletons. In Lemma~\ref{lemma:upper-bound-sparsifier-1}, we show that
an optimal semi-matching is an $\Order(\sqrt{n})$-skeleton, and in Lemma~\ref{lemma:upper-bound-sparsifier-2}, we show
how to obtain a $\Order(n^{1/3})$-skeleton.
In these constructions, we use the following key observation: Given a bipartite
graph $G = (A, B, E)$, let $A' \subseteq A$ be such that $A'$ has minimal expansion, meaning that 
$A' = \argmin_{A'' \subseteq A} \frac{|\Gamma(A'')|}{|A''|}$. The maximal degree in a semi-matching is then clearly
at least $\lceil \frac{|A'|}{|\Gamma(A')|} \rceil$ since all vertices of $A'$ have to be matched to its neighborhood.
However, it is also true that the maximal degree of a semi-matching \textit{equals} $\lceil \frac{|A'|}{|\Gamma(A')|} \rceil$.
A similar fact was used in \cite{gkk12} for fractional matchings, and also in \cite{krt99}. 
For completeness, we are going to prove this fact in Lemma~\ref{lemma:exp-max-degree}. This proof requires 
the following technical lemma, Lemma~\ref{lemma:fractional-matching}.

\begin{lemma} \label{lemma:fractional-matching}
 Let $G = (A, B, E)$ be a bipartite graph and let $A' \subseteq A$ such that $|\Gamma(A')| \le |A'|$. Then:
 \begin{equation*} 
\forall A'' \subseteq A': \frac{|\Gamma(A'')|}{|A''|} \ge \frac{|\Gamma(A')|}{|A'|} \, \Rightarrow \, \degmax \semi(A', B, E) \le \lceil \frac{|A'|}{|\Gamma(A')|} \rceil. 
 \end{equation*}
\end{lemma}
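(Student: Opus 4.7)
\medskip

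\noindent\textbf{Proof plan.} Set $d := \lceil |A'|/|\Gamma(A')| \rceil$. The plan is to construct the required semi-matching via a reduction to Hall's marriage theorem on an auxiliary bipartite graph. Specifically, I will build a graph $H = (A', B^*, E^*)$ obtained from the subgraph $(A', \Gamma(A'), E|_{A' \times \Gamma(A')})$ by replacing each vertex $b \in \Gamma(A')$ with $d$ distinct copies $b^{(1)}, \ldots, b^{(d)}$, and connecting each $a \in A'$ to all copies of its neighbors in $\Gamma(A')$. A matching in $H$ saturating $A'$ corresponds in a natural way to a semi-matching of $(A', B, E)$ in which every $b \in B$ has degree at most $d$, which is exactly what we want.

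\smallskip

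\noindent It then suffices to verify Hall's condition in $H$: for every $A'' \subseteq A'$, we have $|\Gamma_H(A'')| = d \cdot |\Gamma(A'')|$, and the hypothesis $|\Gamma(A'')|/|A''| \ge |\Gamma(A')|/|A'|$ yields
\begin{equation*}
|\Gamma_H(A'')| \;=\; d \cdot |\Gamma(A'')| \;\ge\; d \cdot \frac{|\Gamma(A')|}{|A'|} \cdot |A''|.
\end{equation*}
Since $d = \lceil |A'|/|\Gamma(A')| \rceil \ge |A'|/|\Gamma(A')|$, the factor $d \cdot |\Gamma(A')|/|A'|$ is at least $1$, so $|\Gamma_H(A'')| \ge |A''|$. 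Hall's theorem therefore provides a matching $M$ in $H$ that saturates $A'$.

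\smallskip

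\noindent To finish, collapse the copies: define $S \subseteq E$ by letting $ab \in S$ whenever $ab^{(j)} \in M$ for some $j$. Each $a \in A'$ is matched exactly once (since $M$ saturates $A'$ and the copies of a single $b$ are associated to the same edge of $G$), so $S$ is a semi-matching of $(A', B, E)$; and each $b \in \Gamma(A')$ has degree at most $d$ in $S$ because only $d$ copies of $b$ exist in $H$. Therefore $\degmax \semi(A', B, E) \le \degmax S \le d = \lceil |A'|/|\Gamma(A')| \rceil$, as required.

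\smallskip

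\noindent The only delicate point is the verification of Hall's condition, which hinges on the trivial but crucial inequality $d \cdot |\Gamma(A')| \ge |A'|$ coming from the ceiling; everything else is routine. The assumption $|\Gamma(A')| \le |A'|$ is used only to ensure $d \ge 1$ (so that the copies-construction is nontrivial).
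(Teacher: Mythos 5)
Your proof is correct, and it takes a genuinely different route from the paper's. The paper argues by contradiction: letting $S=\semi(A',B,E)$ and supposing some $b$ has $\deg_S(b)\ge d+1$, it runs an alternating BFS from $b$, building sets $A_i=\Gamma_S(B_i)$ and $B_{i+1}=\Gamma(A_i)\setminus\bigcup_{j\le i}B_j$, and uses the absence of degree-minimizing paths to argue that every reached $B$-vertex has degree at least $\deg_S(b)-1\ge d$; summing degrees then shows the reached set $\tilde A$ has expansion strictly below $|\Gamma(A')|/|A'|$, contradicting the minimal-expansion hypothesis. You instead replicate each vertex of $\Gamma(A')$ into $d$ copies and verify Hall's condition for the replicated graph, which follows directly from the hypothesis plus $d\,|\Gamma(A')|\ge|A'|$. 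Your approach is more elementary and constructive: it exhibits a semi-matching of max degree $\le d$ outright rather than reasoning about the structure of an optimal one, and it avoids degree-minimizing paths altogether. The paper's argument, by contrast, stays within the degree-minimizing-path machinery that the rest of the paper leans on, which makes it stylistically cohesive there. One point worth making explicit in your write-up: your final step $\degmax\semi(A',B,E)\le\degmax S$ uses that an optimal semi-matching (defined in the paper as one admitting no degree-minimizing paths) in fact minimizes the maximum degree over all semi-matchings --- this is a result of Harvey et al.\ \cite{hllt03} that the paper takes for granted, and your proof depends on it, whereas the paper's contradiction argument operates on the optimal semi-matching directly via its defining property.
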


\begin{proof}
The proof is by contradiction. Let $d = \lceil \frac{|A'|}{|\Gamma(A')|} \rceil$, 
$S = \semi(A', B, E)$ and suppose that $\degmax S \ge d + 1$. 
We construct now a set $\tilde{A} \subset A'$ such that 
$\frac{|\Gamma(\tilde{A})|}{|\tilde{A}|} < \frac{|\Gamma(A')|}{|A'|}$ contradicting the premise of the lemma.

To this end, we define two sequences $(A_i)_i$ with $A_i \subseteq A'$ and $(B_i)_i$ with $B_i \subseteq \Gamma(A')$.
Let $b \in \Gamma(A')$ be a node with $\deg_S(b) \ge d + 1$ and let $B_1 = \{ b \}$. 
We define
\begin{eqnarray}
\nonumber A_i & = & \Gamma_S(B_i), \label{eqn:927} \\ 
B_{i+1} & = & \Gamma(A_i) \setminus \cup_{j\le i} B_j.  
\end{eqnarray}
This setting is illustrated in Figure~\ref{figure:lemma-fractional-matching}. Note that all $A_i$ and all $B_i$ are disjoint. 
Let $k$ be such that $|A_k| > 0$ and $|A_{k+1}| = 0$. Then we set $\tilde{A} = \bigcup_{i=1}^k A_i$. 

By construction of the sequence $(B_i)_i$,
it is clear that for any $b' \in \cup B_i: \deg_S(b') \ge \deg_S(b) - 1$, since otherwise there is a 
degree-minimizing path from $b$ to $b'$ contradicting the definition of $S$. 
Then, by Equation~\ref{eqn:927}, we obtain for all $i$ that 
$|A_i| \ge |B_i| (\deg_S(b) - 1)$ which implies that $|A_i| \ge d |B_i|$ since $\deg_S(b) \ge d + 1$. 
Remind that $|A_1| \ge d + 1$. We compute

\begin{equation*}
 \frac{|\Gamma(\tilde{A})|}{|\tilde{A}|} = \frac{|B_1| + \sum_{2 \le i \le k}|B_i|}{|A_1| + \sum_{2 \le i \le k}|A_i|} \le 
\frac{1+\sum_{2 \le i \le k} |B_i|}{(d + 1) + \sum_{2 \le i \le k}|B_i|d} < \frac{1}{d} \le \frac{|\Gamma(A')|}{|A'|},
\end{equation*}
and we obtain a contradiction to the premise of the lemma. \qed
\end{proof}

\begin{figure}[ht!]
\begin{center}
 \includegraphics[height=4.5cm]{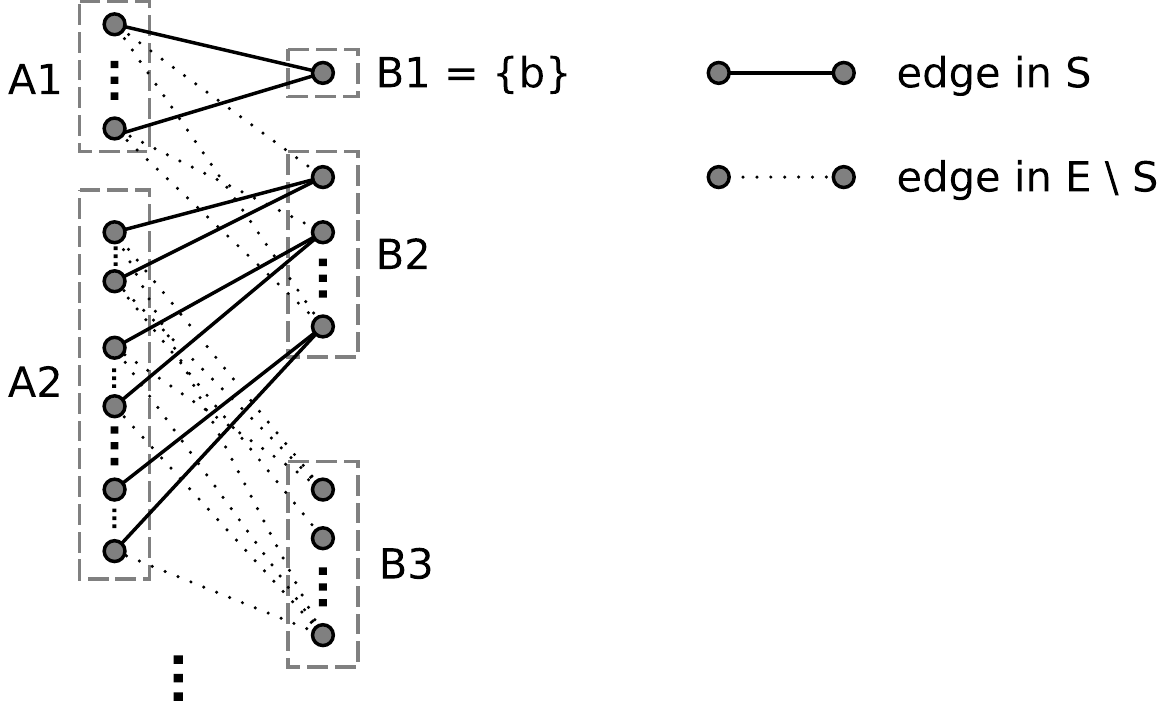}
\end{center}
\caption{Illustration of the proof of Lemma~\ref{lemma:fractional-matching}. All nodes $b' \in \bigcup_{i \ge 2} B_i$ 
have $\deg_S(b') \ge \deg_S(b) - 1$ since otherwise there is a degree-minimizing path. To keep the figure simple,
only those edges of $E \setminus S$ are drawn that connect the $A_i$ to $B_{i+1}$. Note that in general there are also 
edges outside $S$ from $A_i$ to $\bigcup_{j< i} B_j$. However, there are no edges in the graph from 
$A_i$ to $\bigcup_{j \ge i+2} B_j$. \label{figure:lemma-fractional-matching}}
\end{figure}

\begin{lemma} \label{lemma:exp-max-degree}
 Let $G = (A, B, E)$ with $|A| = n$, and let $d = \degmax \semi(A, B, E)$. 
Let $A'$ be a subset of $A$ with minimal expansion $\alpha$, that is
\begin{equation*}
 A' = \argmin_{A'' \subseteq A} \frac{|\Gamma(A'')|}{|A''|}, 
\end{equation*}
 
and let $\alpha = \frac{|\Gamma(A')|}{|A'|}$. Then:

\begin{equation*}
 d = \lceil \alpha^{-1}  \rceil .
\end{equation*}
\end{lemma}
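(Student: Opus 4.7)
The plan is to prove $d = \lceil \alpha^{-1} \rceil$ by establishing both inequalities separately, with the upper bound being the substantive part.

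For the lower bound $d \ge \lceil \alpha^{-1} \rceil$, the argument is essentially pigeonhole. In any semi-matching of $(A,B,E)$ the $|A'|$ vertices of $A'$ must each be matched to some vertex in $\Gamma(A')$, so the average degree on $\Gamma(A')$ is $|A'|/|\Gamma(A')| = \alpha^{-1}$, hence some vertex in $\Gamma(A')$ has degree at least $\lceil \alpha^{-1}\rceil$. This in particular gives $d \ge \lceil \alpha^{-1}\rceil$.

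For the upper bound $d \le \lceil \alpha^{-1}\rceil$ I would argue by contradiction, essentially repeating the construction from the proof of Lemma~\ref{lemma:fractional-matching} but now applied to an optimal semi-matching $S^* = \semi(A,B,E)$ of the whole graph. Set $\delta = \lceil \alpha^{-1}\rceil$ and suppose for contradiction that there is some $b \in B$ with $\deg_{S^*}(b) \ge \delta + 1$. Define the alternating sequences $B_1 = \{b\}$, $A_i = \Gamma_{S^*}(B_i)$, and $B_{i+1} = \Gamma(A_i) \setminus \bigcup_{j \le i} B_j$ exactly as before, and let $k$ be maximal with $A_k \ne \emptyset$. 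Because $S^*$ admits no degree-minimizing path, every $b' \in \bigcup_i B_i$ must satisfy $\deg_{S^*}(b') \ge \deg_{S^*}(b) - 1 \ge \delta$, and in particular $B_{k+1}$ must be empty (otherwise a vertex in $B_{k+1}$ would have degree $0$ in $S^*$, but the construction would then yield a degree-minimizing path from $b$ to that vertex).

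Setting $\tilde A = \bigcup_{i=1}^k A_i$, the disjointness of the $B_i$, $A_i$ together with the inclusion $\Gamma(A_i) \subseteq \bigcup_{j \le i+1} B_j$ gives $\Gamma(\tilde A) \subseteq \bigcup_{i=1}^k B_i$. The bounds $|A_1| \ge \delta + 1$ and $|A_i| \ge \delta |B_i|$ for $i \ge 2$ (from the $A_i = \Gamma_{S^*}(B_i)$ identity and the degree bound on $B_i$) then yield, exactly as in the calculation of Lemma~\ref{lemma:fractional-matching},
\begin{equation*}
\frac{|\Gamma(\tilde A)|}{|\tilde A|} \;\le\; \frac{1 + \sum_{i=2}^k |B_i|}{(\delta + 1) + \delta \sum_{i=2}^k |B_i|} \;<\; \frac{1}{\delta} \;\le\; \alpha,
\end{equation*}
contradicting the minimality of $\alpha$ over subsets of $A$.

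The main obstacle is the careful bookkeeping on when the sequence terminates: I must use optimality of $S^*$ a second time to conclude that once $A_{k+1}$ is empty, $B_{k+1}$ must be empty as well, for otherwise degree-minimizing paths reappear. Once this is in place, the rest is a direct transcription of the counting argument already established in Lemma~\ref{lemma:fractional-matching}, applied now to the full graph rather than to $(A', B, E)$.
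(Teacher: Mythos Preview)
Your proof is correct, but it takes a genuinely different route from the paper's for the upper bound $d \le \lceil\alpha^{-1}\rceil$. The paper argues \emph{constructively}: it decomposes $A$ into sets $A_1, A_2, \dots$, where $A_i$ is a set of minimal expansion in the residual graph $G|_{(A\setminus\bigcup_{j<i}A_j)\times(B\setminus\Gamma(\bigcup_{j<i}A_j))}$, and then matches each $A_i$ into its fresh neighbourhood by invoking Lemma~\ref{lemma:fractional-matching} as a black box. A short monotonicity argument shows the expansions of the $A_i$ are non-decreasing, so the worst block is $A_1$ and the constructed semi-matching has maximal degree exactly $\lceil\alpha^{-1}\rceil$.

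You instead argue by contradiction directly on the optimal semi-matching $S^*$, reusing the \emph{technique} of Lemma~\ref{lemma:fractional-matching} rather than its statement: assuming $\degmax S^* \ge \lceil\alpha^{-1}\rceil + 1$, you grow the layered structure from a high-degree vertex and produce a set $\tilde A$ with expansion strictly below $\alpha$. Your handling of the termination (forcing $B_{k+1}=\emptyset$ via the degree lower bound on all reachable $B$-vertices) is correct and is exactly the extra observation needed beyond Lemma~\ref{lemma:fractional-matching}'s proof. The trade-off: the paper's approach yields an explicit semi-matching achieving the bound (and a structural decomposition of $A$ that is of independent interest), while your approach is shorter and avoids the iterated minimal-expansion extraction altogether.
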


\begin{proof} We show that $d \ge \lceil \alpha^{-1}  \rceil$ and $d \le \lceil \alpha^{-1}  \rceil$ 
separately. 

\begin{enumerate}
 \item \textbf{$d \ge \lceil \alpha^{-1}  \rceil$: } 
The set $A'$ has to be matched entirely to vertices in its neighborhood. 
Therefore, there is a node $b \in \Gamma(A')$ with degree at least
$\lceil \frac{|A'|}{|\Gamma(A')|} \rceil = \lceil \alpha^{-1} \rceil.$

 \item \textbf{$d \le \lceil \alpha^{-1}  \rceil$: } We construct a 
semi-matching explicitly with maximal degree $d$. Since an optimal
semi-matching has at most this degree, the claim follows.

Consider a decomposition of $A$ into sets $A_1, A_2, \dots$ as follows.
$A_1 \subseteq A$ is a set with minimal expansion, and for $i > 1$, 
$A_i \subseteq A \setminus (\bigcup_{j < i} A_j)$ is the set with minimal 
expansion in $G|_{(A \setminus \bigcup_{j < i} A_j) \times (B \setminus \Gamma(\bigcup_{j < i} A_j)) }$.

We construct a semi-matching $\tilde{S} = S_1 \cup S_2 \dots$ as follows. 
Firstly, match $A_1$ to $\Gamma(A_1)$ in $S_1$. By Lemma~\ref{lemma:fractional-matching}, 
the maximal degree in $S_1$ is at most 
$\lceil \frac{|A_1|}{|\Gamma(A_1)|} \rceil = \lceil \alpha^{-1}  \rceil$.

For a general $S_i$, we match $A_i$ to vertices in 
$\Gamma(A_i) \setminus \Gamma(\bigcup_{j < i} A_j)$. 
By Lemma~\ref{lemma:fractional-matching}, the maximal degree in 
$S_i$ is at most $\lceil \frac{|A_i|}{|\Gamma(A_i) \setminus \Gamma(\bigcup_{j < i} A_j)|} \rceil$.

This decomposition is illustrated in Figure~\ref{figure:lemma-exp-max-degree}.

Furthermore, it holds 

\begin{equation*}
 \frac{|A_i|}{\Gamma(A_i) \setminus \Gamma(\bigcup_{j < i} A_j)|} \le \frac{|A_{i+1}|}{\Gamma(A_{i+1}) \setminus \Gamma(\bigcup_{j < {i+1}} A_j)|},
\end{equation*}
since if this was not true, then the set $A_i \cup A_{i+1}$ would have smaller expansion in the graph 
$G|_{(A \setminus \bigcup_{j < i} A_j) \times (B \setminus \Gamma(\bigcup_{j < i} A_j)) }$
than $A_i$. This implies that $\degmax \tilde{S} = \degmax S_1$ which in turn is $\lceil \alpha^{-1}  \rceil$.
\end{enumerate}
\qed
\end{proof}

\begin{figure}[ht!]
\begin{center}
 \includegraphics[height=3.6cm]{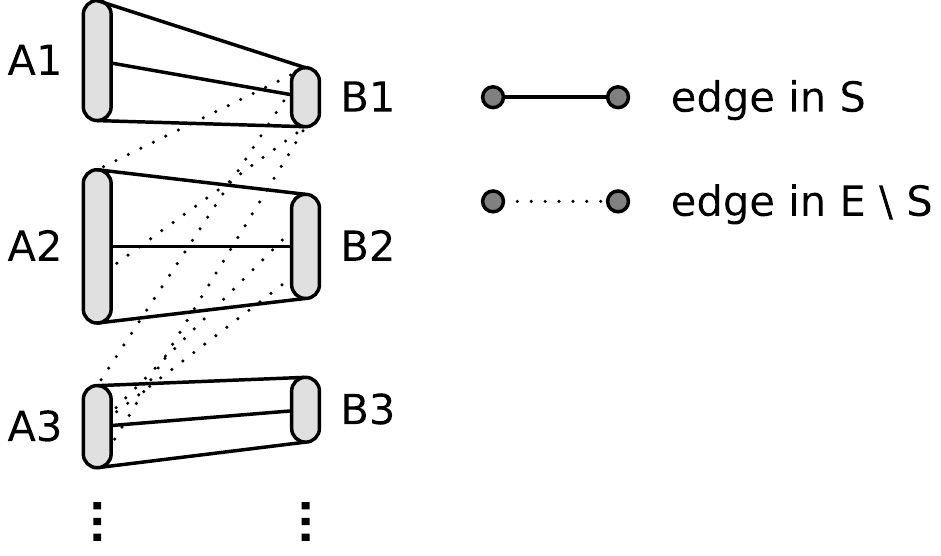}
\end{center}
\caption{Illustration of the graph decomposition used in the proof of Lemma~\ref{lemma:exp-max-degree}. 
Here, $B_i$ is the set $\Gamma(A_i) \setminus \Gamma(\bigcup_{j < i} A_j)$. The neighborhood of
$A_i$ in $G$ is a subset of $\bigcup_{j \le i} B_i$. In $S$, however, $A_i$ is matched entirely to vertices in $B_i$.
 \label{figure:lemma-exp-max-degree}}
\end{figure}

We prove now that an optimal semi-matching is a $\Order(\sqrt{n})$-skeleton.

\begin{lemma} \label{lemma:upper-bound-sparsifier-1}
 Let $G = (A, B, E)$ with $n = |A|$, and let $S = \semi(A, B, E)$. Then: 
\begin{equation*}
\forall A' \subseteq A: \degmax \semi(A', B, S) < \sqrt{n} \, (\degmax \semi(A', B, E))^{1/2} + 1. 
\end{equation*}
\end{lemma}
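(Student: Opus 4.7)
The plan is to prove the stronger bound $\tilde{d} \le \sqrt{n d'}$, where $\tilde{d} := \degmax \semi(A', B, S)$ and $d' := \degmax \semi(A', B, E)$, which immediately yields the stated inequality. I would obtain this by combining a local expansion witness inside $A'$ using only $S$-edges with a propagation argument that exploits the optimality of $S$ to force large expansion in the ambient graph.

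First I would apply Lemma~\ref{lemma:exp-max-degree} to the bipartite graph $(A', B, S|_{A' \times B})$ to obtain a minimum-expansion witness $\tilde{A} \subseteq A'$ with $\tilde{d} = \lceil |\tilde{A}| / |\Gamma_S(\tilde{A})| \rceil$. Since $S$ is a semi-matching of all of $A$, every $a \in \tilde{A}$ has a unique $S$-partner $S(a) \in \Gamma_S(\tilde{A})$, so a pigeonhole argument on the preimage sizes produces some $b^* \in \Gamma_S(\tilde{A})$ for which $A_{b^*} := S^{-1}(b^*) \cap \tilde{A}$ has $|A_{b^*}| \ge \tilde{d}$. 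In particular $A_{b^*} \subseteq A'$ and $\deg_S(b^*) \ge \tilde{d}$.

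Next I would run a BFS in the alternating structure of $S$ starting from $b^*$: let $B_1 = \{b^*\}$, $A_i = S^{-1}(B_i)$, $B_{i+1} = \Gamma_E(A_i) \setminus \bigcup_{j \le i} B_j$, and put $A^+ = \bigcup_i A_i$, $B^+ = \bigcup_i B_i$. By construction $\Gamma_E(A^+) \subseteq B^+$. The key step, which I expect to be the main subtlety, is to show that every $b \in B^+$ satisfies $\deg_S(b) \ge \tilde{d}$. Every such $b$ is reachable from $b^*$ along an alternating path $b^* = b_1, a_1, b_2, \ldots, a_{k-1}, b_k = b$ with $a_j b_j \in S$ and $a_j b_{j+1} \in E \setminus S$; if we had $\deg_S(b) < \deg_S(b^*)$, the sequence of $S$-degrees along the path could not be non-decreasing, so some adjacent pair $\deg_S(b_j) > \deg_S(b_{j+1})$ would provide a length-$2$ degree-minimizing path $(b_j, a_j, b_{j+1})$, contradicting the optimality of $S$.

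Finally I would close the argument with a double bound on $|B^+|$. Using the claim, $|A^+| = \sum_{b \in B^+} \deg_S(b) \ge \tilde{d}\,|B^+|$, and since $|A^+| \le n$ this yields $|B^+| \le n/\tilde{d}$. In the other direction, $A^+ \cap A' \supseteq A_{b^*}$ so $|A^+ \cap A'| \ge \tilde{d}$, and because $\semi(A', B, E)$ has maximum degree $d'$, every $A'' \subseteq A'$ obeys $|\Gamma_E(A'')| \ge |A''|/d'$; applied to $A'' = A^+ \cap A'$ and using $\Gamma_E(A'') \subseteq \Gamma_E(A^+) \subseteq B^+$, this gives $|B^+| \ge \tilde{d}/d'$. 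Combining the two bounds produces $\tilde{d}^2 \le n d'$, hence $\tilde{d} \le \sqrt{n d'} < \sqrt{n}\,(d')^{1/2} + 1$, as claimed.
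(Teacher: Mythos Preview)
Your overall strategy is close to the paper's: both arguments start from the minimum-expansion witness furnished by Lemma~\ref{lemma:exp-max-degree}, use the optimality of $S$ to force many $B$-vertices to have large $S$-degree, and finish by playing a degree-sum bound against $n$ off against the expansion bound coming from $d' = \degmax \semi(A',B,E)$. The paper works with the entire set $A''$ and only looks one step out to $\Gamma_E(A'')$, whereas you run a full alternating BFS from a single high-degree vertex $b^*$; this is essentially the construction already used in the proof of Lemma~\ref{lemma:fractional-matching}.

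There is, however, a real gap in your propagation step. You argue that if $\deg_S(b) < \deg_S(b^*)$ then some adjacent pair along the BFS path satisfies $\deg_S(b_j) > \deg_S(b_{j+1})$, and you declare $(b_j,a_j,b_{j+1})$ a length-$2$ degree-minimizing path. But a drop of exactly one is \emph{not} degree-minimizing: flipping such an edge pair merely swaps the two degrees and leaves the degree multiset unchanged, so an optimal $S$ can (and in general does) contain such pairs. For a concrete obstruction take $A=\{a_1,a_2,a_3\}$, $B=\{b_1,b_2\}$ with $a_1b_1,a_2b_1,a_2b_2,a_3b_2\in E$; every semi-matching here has a length-$2$ alternating path with a degree drop of one. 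Consequently only $\deg_S(b) \ge \deg_S(b^*) - 1 \ge \tilde d - 1$ is justified for $b \in B^+$, and even this needs a different argument than ``find an adjacent drop'': one should observe that if $\deg_S(b)\le \deg_S(b^*)-2$ then flipping the entire alternating BFS path from $b^*$ to $b$ strictly improves the degree profile, contradicting optimality of $S$ (this is exactly how the paper obtains $\deg_S(x)\ge d-1$ in Lemma~\ref{lemma:fractional-matching} and in the present lemma). With the corrected bound your counting gives $\tilde d/d' \le |B^{+}| \le n/(\tilde d - 1)$, hence $\tilde d(\tilde d - 1)\le n d'$ and therefore $\tilde d < \sqrt{n d'}+1$, which is precisely the statement of the lemma; your announced stronger inequality $\tilde d \le \sqrt{n d'}$ does not follow.
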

\begin{proof}
 Let $A' \subseteq A$ be an arbitrary subset. Let $A'' = \argmin_{A''' \subseteq A'} \frac{|\Gamma_S(A''')|}{|A'''|}$, and let
$k = |\Gamma_S(A'')|$. 
Let $d = \degmax \semi(A', B, S)$. Then by Lemma~\ref{lemma:exp-max-degree}, $d = \lceil \frac{|A''|}{k} \rceil$.
Furthermore, since $A''$ is the set of minimal expansion in $S$, for all $b \in \Gamma_S(A''): \deg_S(b) = d$,
and hence $|A''| = kd$.

Let $d^* = \degmax \semi(A'', B, E)$. Then $d^* \le \degmax \semi(A', B, E)$, since $A'' \subseteq A'$. 
It holds that $\forall x \in \Gamma_E(A'') \setminus \Gamma_S(A''): \deg_S(x) \ge d - 1$
since otherwise there was a degree-minimizing path of length $2$ in $S$. 
Figure~\ref{figure:comm-upper-bound-1} illustrates this setting. 
The sum of the degrees of the vertices in $\Gamma_E(A'')$ is upper-bounded by the number of $A$ nodes. We obtain hence
$(|\Gamma_E(A'')| - k) (d-1) + k d \le n,$
and this implies that $|\Gamma_E(A'')| \le \frac{n-k}{d-1}$.  Clearly, $d^* \ge |A''|/|\Gamma_E(A'')|$, and using the prior upper bound
on $|\Gamma_E(A'')|$ and the equality $|A''| = kd$, we obtain $d^* \ge \frac{kd(d-1)}{n-k}$ which implies that 
$d < \sqrt{n} \sqrt{d^*} + 1$ for any $k \ge 1$. 
\qed
\end{proof}

\begin{figure}[ht!]
 \begin{center}
\begin{minipage}{6cm}
\hspace{1.05cm} \includegraphics[height=5.5cm]{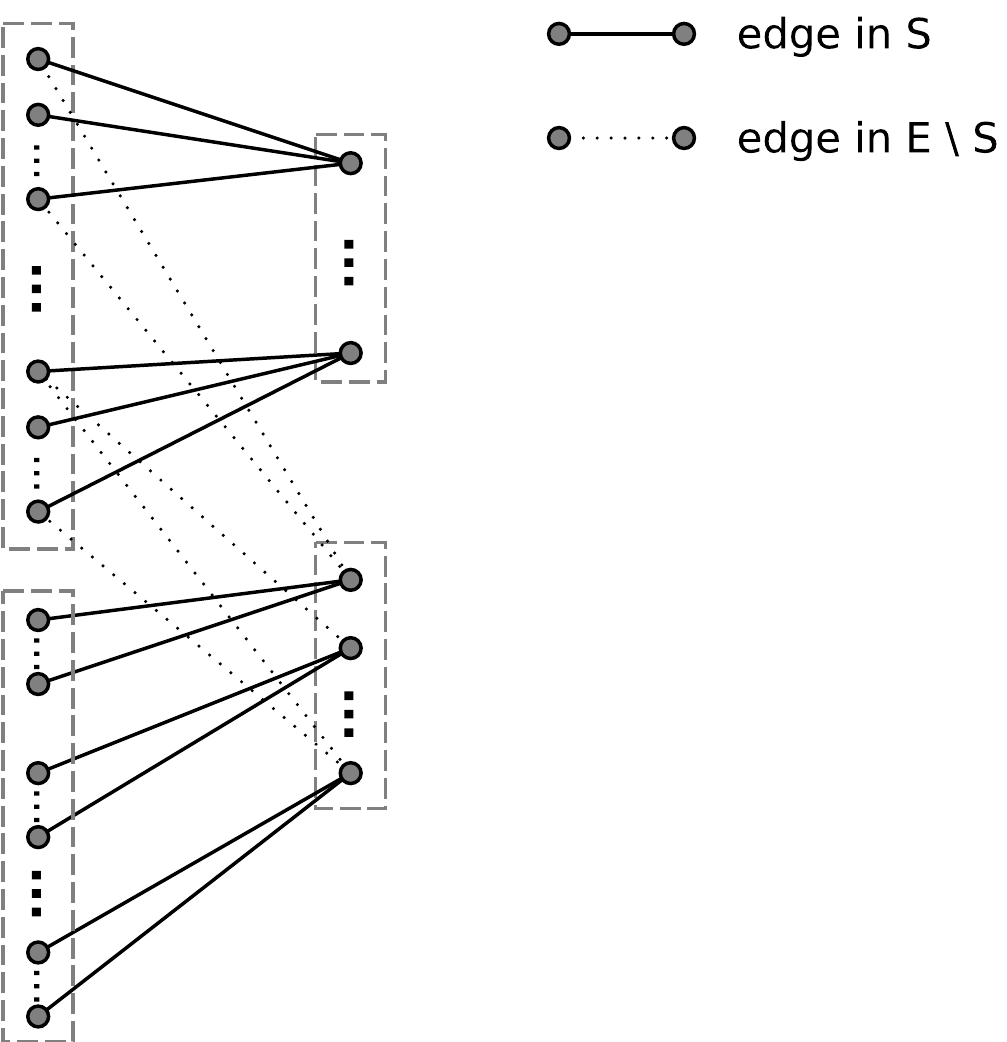}

\vspace{-4.25cm} \hspace{0.3cm} $A''$ \hspace{2.3cm} $\Gamma_S(A'')$ 

\vspace{2cm} $A \setminus A''$ \hspace{2.25cm} $\Gamma_E(A'') \setminus \Gamma_S(A'')$

\vspace{1.8cm}
\end{minipage}
\end{center}
\caption{Illustration of the proof of Lemma~\ref{lemma:upper-bound-sparsifier-1}. 
All nodes $b \in \Gamma_S(A'')$ have $\deg_S(b) = d$, and all nodes $b' \in \Gamma_E(A'') \setminus \Gamma_S(A'')$
have $\deg_S(b) \ge d-1$.
\label{figure:comm-upper-bound-1}}
\end{figure}

In order to obtain an $\Order(n^{1/3})$-skeleton, for each $a \in A$ we add one edge to the $\Order(\sqrt{n})$-skeleton.
Let $S = \semi(A, B, E)$ be the $\Order(\sqrt{n})$-skeleton, let $B' = B(S)$ be the $B$ nodes that are matched
in the skeleton, and for all $b \in B'$ let $A_b = \Gamma_S(b)$ be the set of $A$ nodes that are matched to $b$ in $S$.
Intuitively, in order to obtain a better skeleton, we have to increase the size of the neighborhood in the skeleton of all
subsets of $A$, and in particular of the subsets $A_b$ for $b \in B'$. We achieve this by adding additional optimal semi-matchings 
$S_b = \semi(A_b, B, E)$ for all subsets $A_b$ with $b \in B'$ to $S$, see Lemma~\ref{lemma:upper-bound-sparsifier-2}.
We firstly prove a technical lemma, Lemma~\ref{lemma:size-condition-semi-matching-subset}, that points out an important 
property of the interplay between the matchings $S$ and the matchings $S_b$ for $b \in B'$. Then, we state in Lemma~\ref{lemma:hoelder} 
an inequality that is an immediate consequence of H\"older's inequality. Lemma~\ref{lemma:hoelder} is then used in the proof
of Lemma~\ref{lemma:upper-bound-sparsifier-2}, which proves that our construction is an $\Order(n^{1/3})$-skeleton.

\begin{lemma} \label{lemma:size-condition-semi-matching-subset}
 Let $G = (A, B, E)$, $A' \subseteq A$, $A'' \subseteq A'$, and let $S = \semi(A', B, E)$. Furthermore, let 
$\Gamma_S(A') = \{ b_1, \dots, b_k \}$, and $\forall b_i \in \Gamma_S(A'): $ let $A'_i = \Gamma_S(b_i) \cap A'$, 
and $A''_i = \Gamma_S(b_i) \cap A''$. Then:

\begin{equation*}
 \degmax \semi(A'', B, E)^{-1} \sum_{i: b_i \in \Gamma_S(A'')} |A''_i| (|A'_i| - 1) \le |A'|.
\end{equation*}
\end{lemma}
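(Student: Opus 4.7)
The plan is to compare the given optimal semi-matching $S$ with an optimal semi-matching $T := \semi(A'', B, E)$ of $A''$, and to exploit the fact that $S$ admits no length-two degree-minimizing path. For each $a \in A''$, let $S(a)$ and $T(a)$ denote the unique $B$-endpoints of the edges of $S$ and $T$ incident to $a$. First I would rewrite the left-hand side as a sum indexed by the vertices of $A''$. Since $|A'_i| = \deg_S(b_i)$, distributing over the elements of $A''_i$ gives
\[
\sum_{i \,:\, b_i \in \Gamma_S(A'')} |A''_i|\,(|A'_i|-1) \;=\; \sum_{a \in A''}\bigl(\deg_S(S(a)) - 1\bigr),
\]
so the task reduces to bounding the right-hand side by $\degmax T \cdot |A'|$.

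The crucial structural step is a pointwise comparison between $S$ and $T$. Since $S = \semi(A', B, E)$ admits no length-two degree-minimizing path, for every $a \in A' \supseteq A''$ and every neighbor $b \in \Gamma_E(a)$ one must have $\deg_S(b) \ge \deg_S(S(a))$; otherwise the path $(S(a), a, b)$ would be degree-minimizing, contradicting optimality of $S$. Applying this with $b = T(a)$, which is indeed a neighbor of $a$ in $E$, I obtain $\deg_S(T(a)) \ge \deg_S(S(a)) \ge \deg_S(S(a)) - 1$ for every $a \in A''$.

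Summing this inequality over $a \in A''$ and regrouping the right-hand side by the $B$-endpoint of $T$ gives
\[
\sum_{a \in A''}\bigl(\deg_S(S(a)) - 1\bigr) \;\le\; \sum_{a \in A''} \deg_S(T(a)) \;=\; \sum_{b \in B} \deg_T(b)\,\deg_S(b).
\]
Writing $d^* := \degmax T$, the bound $\deg_T(b) \le d^*$ holds uniformly in $b$, and $\sum_{b \in B}\deg_S(b) = |A'|$ because $S$ is a semi-matching of $A'$. Hence the right-hand side is at most $d^*\,|A'|$, and dividing through by $d^*$ proves the claim. The main obstacle is identifying the pointwise inequality $\deg_S(T(a)) \ge \deg_S(S(a)) - 1$ as the correct consequence of $S$ admitting no length-two degree-minimizing path; after that, the argument is a straightforward double counting that requires no tool more refined than $\deg_T(b) \le d^*$.
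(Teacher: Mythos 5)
Your proof is correct and matches the paper's argument in its essentials: use the absence of length-two degree-minimizing paths in $S$ to get $\deg_S(T(a)) \ge \deg_S(S(a)) - 1$ for $T = \semi(A'',B,E)$, and finish by double counting against $T$ using $\deg_T(b)\le d^*$ and $\sum_b\deg_S(b)=|A'|$. Two small remarks: (i) your intermediate claim $\deg_S(T(a)) \ge \deg_S(S(a))$ is one too strong — a length-two degree-minimizing path requires the two degrees to differ by at least two, so the correct consequence is $\deg_S(T(a)) \ge \deg_S(S(a)) - 1$, which is all you actually use, so nothing breaks; (ii) the paper organizes the same double count dually, summing over $b'' \in B(T)$ and bounding $\deg_S(b'')$ from below by an average of $\deg_S(S(a))-1$ over $a\in\Gamma_T(b'')$, whereas you sum over $a\in A''$ and regroup by $T(a)$ — equivalent bookkeeping, yours arguably a bit cleaner.
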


\begin{proof}
Let $S'' = \semi(A'', B, E)$, and denote $d = \degmax S''$. Clearly,

\begin{equation}
 \sum_{b'' \in B(S'')} \deg_S(b'') \le |A'|. \label{eqn:391}
\end{equation}

Consider any $b'' \in B(S'')$. We bound $\deg_S(b'')$  from above as follows
\begin{eqnarray}
\deg_{S}(b'') & \ge &  \max \{|A'_i| - 1 \, : \, \exists a \in A''_i \mbox{ with } b'' \in \Gamma_E(a) \} . \label{eqn:582}
\end{eqnarray}

Let $j$ be such that $|A'_j| - 1$ poses the maximum of the set in the right hand side of Inequality~\ref{eqn:582}. 
Note that if Inequality~\ref{eqn:582} was not true, then there would be a length two degree minimizing path in $S$ connecting $b''$ and
$b_j$. The setup up visualized in Figure~\ref{figure:comm-upper-bound-lemma}. We bound now the right hand side of Inequality~\ref{eqn:582} as follows
\begin{eqnarray}
(|A'_j| - 1) = \max \{|A'_i| - 1 \, & : & \, \exists a \in A''_i \mbox{ with } b'' \in \Gamma_E(a) \}   \nonumber \\ 
& \ge & \sum_{a \in \Gamma_{S''}(b'')} \frac{1}{\deg_{S''}(b'')} (|A'_{B(S(a))}|-1). \label{eqn:116}
\end{eqnarray}

We used here that $|A'_{B(S(a))}| \le |A'_j|$ for any $a \in \Gamma_{S''}(b'')$, and $|a \in \Gamma_{S''}(b'')| = \deg_{S''}(b'')$.
Since $d = \degmax S''$, and using Inequalities~\ref{eqn:582} and \ref{eqn:116} we obtain
\begin{equation}
\deg_{S}(b'') \ge \sum_{a \in \Gamma_{S''}(b'')} \frac{1}{d} (|A'_{B(S(a))}|-1) . \label{eqn:918}
\end{equation}

We combine Inequalities~\ref{eqn:391} and \ref{eqn:918}, and the result follows

\begin{eqnarray*}
 |A'| \ge \sum_{b'' \in B(S'')} \deg_S(b'') & \ge & \sum_{b'' \in B(S'')} \sum_{a \in \Gamma_{S''}(b'')} \frac{1}{d} (|A'_{B(S(a))}|-1)  \\
  & = & \frac{1}{d} \sum_{A''_i} |A''_i| |A'_i - 1|.
\end{eqnarray*} \qed
\end{proof}

\begin{figure}[ht!]
 \begin{center}
\includegraphics[height=5.5cm]{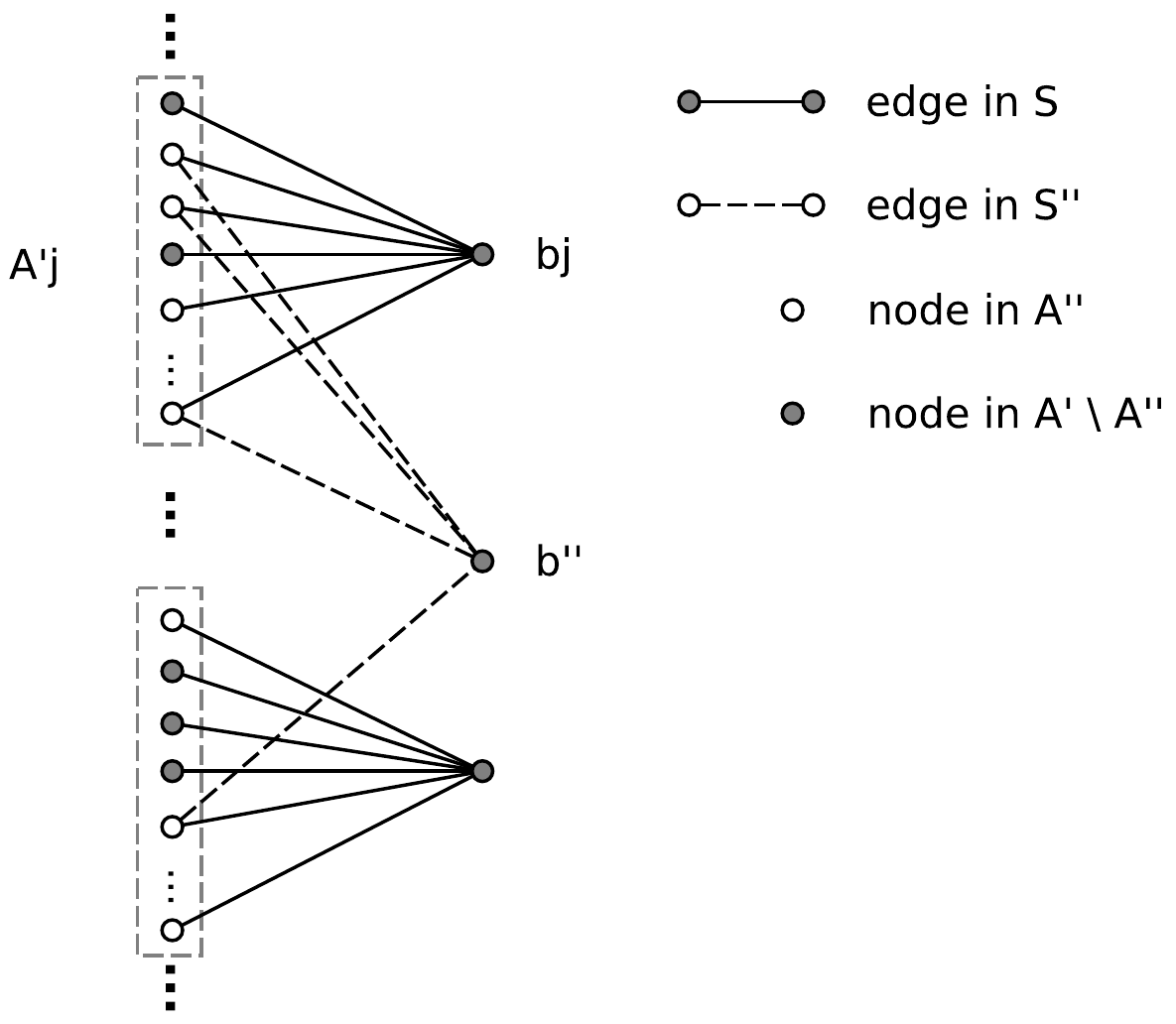}
\end{center}
\caption{Illustration of the proof of Lemma~\ref{lemma:size-condition-semi-matching-subset}. 
The degree of $b''$ in $S$ is at least $|A'_j| - 1$. Otherwise there would be a length two degree-minimizing path between $b''$ and $b_j$. 
\label{figure:comm-upper-bound-lemma}}
\end{figure}

In the proof of Lemma~\ref{lemma:upper-bound-sparsifier-2}, we also need the following 
inequality.

\begin{lemma}  \label{lemma:hoelder}
 Let $x_1, \dots, x_k \ge 0$, and let $p > 0$ be an integer. Then:
\begin{equation*}
 \frac{(\sum_{i=1}^k x_i)^p}{k^{p-1}} \le \sum_{i=1}^k x_i^p.
\end{equation*}
\end{lemma}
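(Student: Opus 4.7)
The plan is to derive the inequality as a direct consequence of Hölder's inequality, as suggested by the lemma's name. For $p = 1$ the statement is a trivial equality, so I would focus on $p \ge 2$. I would apply Hölder's inequality with exponents $p$ and $q = p/(p-1)$ to the sequences $(x_i)_{i=1}^k$ and the constant sequence $(1, 1, \dots, 1)$, obtaining
\begin{equation*}
\sum_{i=1}^k x_i \;=\; \sum_{i=1}^k x_i \cdot 1 \;\le\; \left(\sum_{i=1}^k x_i^p\right)^{1/p} \left(\sum_{i=1}^k 1^q\right)^{1/q} \;=\; k^{(p-1)/p} \left(\sum_{i=1}^k x_i^p\right)^{1/p}.
\end{equation*}
Raising both sides to the $p$-th power and then dividing by $k^{p-1}$ yields the claimed bound immediately.

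An equivalent route, which avoids invoking Hölder as a black box, is Jensen's inequality applied to the convex function $t \mapsto t^p$ (convex since $p$ is a positive integer, hence $p \ge 1$): one has $\bigl(\frac{1}{k}\sum_i x_i\bigr)^p \le \frac{1}{k}\sum_i x_i^p$, and multiplying through by $k^p$ gives exactly the inequality of the lemma. Either presentation fits in two lines.

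There is essentially no obstacle here: the lemma is a restatement of the power-mean inequality between the arithmetic mean and the $p$-th power mean, and the only nontrivial input (convexity of $t \mapsto t^p$ for $p \ge 1$, or Hölder's inequality) is entirely standard. I would write it up in the shorter of the two forms above.
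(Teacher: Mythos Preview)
Your proposal is correct and matches the paper's own proof essentially verbatim: the paper also derives the inequality as an immediate consequence of H\"older's inequality in the form $\sum_i x_i \le (\sum_i x_i^p)^{1/p} k^{(p-1)/p}$. Your additional remark about the Jensen-inequality route is a fine alternative but goes slightly beyond what the paper records.
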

\begin{proof}
 This is an immediate consequence of H\"{o}lder's inequality:

\begin{eqnarray*}
 \sum_{i=1}^k x_i \le ( \sum_{i=1}^k x_i^{p})^{1/p}  k^{\frac{p-1}{p}}.
\end{eqnarray*}
\qed
\end{proof}

\begin{lemma} \label{lemma:upper-bound-sparsifier-2}
 Let $G = (A, B, E)$ be a bipartite graph with $n = |A|$. Let $S = \semi(A, B, E)$, and for all $b \in B(S): S_b = \semi(\Gamma_S(b), B, E)$. Then:
\begin{equation*}
\forall A' \subseteq A: \degmax \semi(A', B, S \cup \bigcup_{b \in B(S)} S_b) \le \lceil 2 n^{1/3} \degmax \semi(A', B, E) \rceil. 
\end{equation*}
\end{lemma}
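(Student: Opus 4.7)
My plan is to adapt the expansion-based argument of Lemma~\ref{lemma:upper-bound-sparsifier-1}, combining the global optimality of $S$ (through Lemma~\ref{lemma:size-condition-semi-matching-subset}) with the per-block optimality of each $S_{b_j}$ (through an internal application of Lemma~\ref{lemma:upper-bound-sparsifier-1}), and then merging the resulting estimates via H\"older's inequality (Lemma~\ref{lemma:hoelder}). Set $T = S \cup \bigcup_{b \in B(S)} S_b$, fix $A' \subseteq A$, and write $d^* = \degmax \semi(A', B, E)$ and $d = \degmax \semi(A', B, T)$. By Lemma~\ref{lemma:exp-max-degree}, there is $A'' \subseteq A'$ realising the minimum expansion of $A'$ in $T$; as in the proof of Lemma~\ref{lemma:upper-bound-sparsifier-1}, setting $k := |\Gamma_T(A'')|$ we have $d = \lceil |A''|/k \rceil$ and $|A''| = kd$. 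Because $S$ is a semi-matching, its blocks $A_j := \Gamma_S(b_j)$ (for $b_j \in B(S)$) partition $A$, so the sets $A''_j := A_j \cap A''$ partition $A''$; let $J$ index the nonempty ones.

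Two structural estimates then follow. First, Lemma~\ref{lemma:size-condition-semi-matching-subset} applied to $(A, A'', S)$, together with $\degmax \semi(A'', B, E) \le d^*$, gives the global budget
$$\sum_{j \in J} |A''_j|\,(|A_j| - 1) \le n\, d^*.$$
Second, for each $j \in J$, since $A''_j \subseteq A'$ implies $\degmax \semi(A''_j, B, E) \le d^*$, applying Lemma~\ref{lemma:upper-bound-sparsifier-1} inside the subgraph $(A_j, B, E|_{A_j \times B})$ with skeleton $S_{b_j}$ yields
$$\degmax \semi(A''_j, B, S_{b_j}) < \sqrt{|A_j|\, d^*} + 1,$$
and therefore, by Lemma~\ref{lemma:exp-max-degree}, writing $\gamma_j := |\Gamma_{S_{b_j}}(A''_j)|$,
$$|A''_j| \le \bigl(\sqrt{|A_j|\, d^*} + 1\bigr)\, \gamma_j.$$
Because the $A_j$ are pairwise disjoint, $\gamma_j \le k$ and $|J| \le k$; and since each $a \in A''$ is matched in the single block-matching $S_{b_{j(a)}}$, a double counting argument yields $\sum_{j \in J} \gamma_j \le |A''|$.

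The final step is to combine these three ingredients via H\"older's inequality (Lemma~\ref{lemma:hoelder} with $p = 3$). Substituting the per-block bound on $|A''_j|$ and rearranging, the idea is to bound $|A''|^3 = (\sum_j |A''_j|)^3$ by a product of three factors controlled respectively by the global budget $\sum_j |A''_j|\,|A_j| \le n d^* + |A''|$, the collective neighborhood inequality $\sum_j \gamma_j \le |A''|$, and the simple bound $|J|, \max_j \gamma_j \le k$. The cubic exponent is chosen so that the $n$- and $d^*$-factors compose into $n\, d^{*2}$, producing an inequality of the form $|A''|^3 \le C\, k^2\, n\, d^{*2}\, |A''|$ for an absolute constant $C$; substituting $|A''| = kd$ and simplifying then gives $d \le 2 n^{1/3} d^*$, which implies the claimed bound after absorbing integrality into the ceiling. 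The main technical obstacle lies in this last H\"older balancing: the exponents must be tuned so that a cube root of $n\, d^{*2}\, k^2\, |A''|$ produces exactly $n^{1/3} d^*\, k$, which is the reason Lemma~\ref{lemma:hoelder} has been formulated with $p = 3$ rather than the quadratic Cauchy--Schwarz form used to obtain the weaker $\sqrt{n}$-skeleton.
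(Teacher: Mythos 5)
Your overall strategy—apply Lemma~\ref{lemma:size-condition-semi-matching-subset} globally to $S$, apply something per block to each $S_{b_j}$, and combine via H\"older—is the same high-level plan as the paper's, but the per-block tool you pick is too weak, and the final ``H\"older balancing'' you sketch does not in fact yield the $n^{1/3}$ bound. Concretely: you derive $|A''_j| < (\sqrt{|A_j|d^*}+1)\gamma_j$ by invoking Lemma~\ref{lemma:upper-bound-sparsifier-1} inside the block and then reading off the trivial expansion lower bound $|A''_j|/\gamma_j$. This estimate is vacuous precisely when $\gamma_j$ is close to $|A''_j|$, and in general it is a full factor of $\gamma_j$ weaker than what one gets by applying Lemma~\ref{lemma:size-condition-semi-matching-subset} directly to $S_{b_j}$, which (after $p=2$ H\"older over at most $k$ terms) gives $|A''_j|^2 \le d k |A_j| + k|A''_j|$. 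Squaring yours gives $|A''_j|^2 \lesssim d^* \gamma_j^2 |A_j|$; the paper's has $\gamma_j$ (or $k$), not $\gamma_j^2$. Carrying your three budgets through H\"older with $p=3$ in the natural way ($|A''_j|^3 \lesssim d^* k^2 |A_j||A''_j|$, then $\sum_j$ and Lemma~\ref{lemma:hoelder} with $|J|\le k$ terms) yields $|A''|^3 \lesssim d^{*2}k^4 n + d^*k^4|A''|$, hence only $\lceil |A''|/k\rceil \lesssim n^{1/3}d^{*2/3}k^{1/3}$ — off by a factor $k^{1/3}$, which is unbounded a priori. Trying $\sum_j\gamma_j\le|A''|$ in place of $\gamma_j\le k$ does no better: it gives $\sum_j|A''_j|^3/\gamma_j^2 \ge |A''|^3/(\sum\gamma_j)^2 \ge |A''|$, which leads only to the vacuous $|A''|\lesssim nd^{*2}+d^*|A''|$. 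Also note your stated target inequality, $|A''|^3\le Ck^2nd^{*2}|A''|$, simplifies to $|A''|\le\sqrt{C n}\,d^*k$, i.e.\ $d\le\sqrt{Cn}\,d^*$: even if you could reach it, it reproves Lemma~\ref{lemma:upper-bound-sparsifier-1}, not the $n^{1/3}$ claim (the ``cube root of $nd^{*2}k^2|A''|$'' in your last sentence is $n^{1/3}d^{*2/3}k\,d^{1/3}$ after substituting $|A''|=kd$, which is not $n^{1/3}d^*k$ unless $d=d^*$).

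The paper's proof avoids both problems. It applies Lemma~\ref{lemma:size-condition-semi-matching-subset} (not Lemma~\ref{lemma:upper-bound-sparsifier-1}) to each $S_{b_j}$, getting the quadratic per-block constraint $\frac{1}{dk}|A''_j|^2 - \frac{1}{d}|A''_j|\le |A_j|$, substitutes this into the global constraint, and applies Lemma~\ref{lemma:hoelder} with $p=3$ to the resulting cubic sum. Crucially, it does a three-way case analysis: if $k\ge n^{1/3}$ the global argument of Lemma~\ref{lemma:upper-bound-sparsifier-1} alone already gives $n^{1/3}\sqrt{d}$; if $k<n^{1/3}$ and $|A''|<2dk^2$ the bound is immediate from $\lceil|A''|/k\rceil$; only in the remaining case $k<n^{1/3}$, $|A''|\ge 2dk^2$ is the per-block substitution used, and the side-condition $|A''|\ge 2dk^2$ is exactly what makes the lower-order terms $I$, $II$ in the resulting cubic inequality absorbable. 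Your plan has no case split and no mechanism for controlling those lower-order terms. To fix the proposal you would need to (a) replace the per-block use of Lemma~\ref{lemma:upper-bound-sparsifier-1} by a direct application of Lemma~\ref{lemma:size-condition-semi-matching-subset} to $S_{b_j}$, and (b) introduce the case distinction on $k$ and on $|A''|$ vs.\ $2dk^2$.
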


\begin{proof}
 Let $A' \subseteq A$. Let $\tilde{S} = S \cup \bigcup_{b \in B(S)} S_b$. Let 
$A'' = \argmin_{A''' \subseteq A'} \frac{|\Gamma_{\tilde{S}}(A''')|}{|A'''|}$ and let $k = |\Gamma_{\tilde{S}}(A'')|$.
From Lemma~\ref{lemma:exp-max-degree} it follows that $\degmax \semi(A', B, \tilde{S}) = \lceil \frac{|A''|}{k} \rceil$.
Furthermore, let $d = \degmax \semi(A'', B, E)$.
For a node $b \in \Gamma_{\tilde{S}}(A'')$, let $A''_b = \{a \in A \, : \, \tilde{S}(a) = b \}$. For two nodes 
$b_i, b_j \in \Gamma_{\tilde{S}}(A'')$, let $A''_{b_i, b_j} = \{ a \in A'' \, : \, S(a) = b_i, S_{b_i}(a) = b_j \}$.

We consider the cases $k \ge n^{1/3}$ and $k < n^{1/3}$ separately.

\begin{enumerate}
 \item $k \ge n^{1/3}$. Consider the semi-matching $S$. From Lemma~\ref{lemma:size-condition-semi-matching-subset} we obtain the 
condition
\begin{equation*} 
 1/d \sum_{i=1}^k |A''_i| (A_i - 1) \le n,
\end{equation*}
and since $A''_i \le A_i$ we obtain from the prior Inequality that
\begin{equation*}
 1/d \sum_{i=1}^k (|A''_i| - 1)^2 < n.
\end{equation*}

Using $\sum_{i=1}^k |A''_i| = |A''|$ and Lemma~\ref{lemma:hoelder}, we obtain 
\begin{eqnarray}
\nonumber \frac{1}{d}  \frac{1}{k}  (|A''| - k)^2 & < & n, \quad \Rightarrow \\
 |A''| & < & \sqrt{n d k} + k . \label{eqn:293}
\end{eqnarray}
Then, since $\degmax \semi(A'', B, \tilde{S}) = \lceil \frac{|A''|}{k} \rceil$, we obtain from Inequality~\ref{eqn:293}
$\degmax \semi(A'', B, \tilde{S}) \le \lceil \frac{\sqrt{n d}}{\sqrt{k}} \rceil + 1$. 
Since $k \ge n^{1/3}$,
we conclude that
\begin{equation*}
 \degmax \semi(A'', B, \tilde{S}) \le n^{1/3} \sqrt{d} + 2.
\end{equation*}

 \item $k < n^{1/3}$. We consider here the two subcases $|A''| < 2dk^2$ and $|A''| \ge 2dk^2$.
\begin{enumerate}
\item $|A''| < 2dk^2$. Then since $\degmax \semi(A'', B, \tilde{S}) = \lceil \frac{|A''|}{k} \rceil$, 
we conclude that
\begin{equation*}
\degmax \semi(A'', B, \tilde{S}) \le \lceil 2dk \rceil < \lceil 2d n^{1/3} \rceil .
\end{equation*}

\item $|A''| \ge 2dk^2$.
Let $b \in B(S)$ and consider the semi-matching $S_b$ matching $A''_b$ to $B$. 
From Lemma~\ref{lemma:size-condition-semi-matching-subset} and the fact that $A''_{b, b_i} \subseteq A'_{b, b_i}$ 
we obtain

\begin{eqnarray*}
 \frac{1}{d} \sum_{i=1}^k |A''_{b, b_i}| (|A''_{b, b_i}| - 1) & \le & |A_b|, \\
 \left( \frac{1}{d} \sum_{i=1}^k |A''_{b, b_i}|^2 \right) - \frac{1}{d} |A''_b| & \le & |A_b| .
\end{eqnarray*}
By Lemma~\ref{lemma:hoelder}, we obtain
\begin{equation}
\frac{1}{dk} |A''_b|^2 - \frac{1}{d} |A''_b| \le |A_b| .  \label{eqn:984}
\end{equation}
Consider now the semi-matching $S$. From Lemma~\ref{lemma:size-condition-semi-matching-subset} we obtain the 
condition
\begin{equation} 
 \frac{1}{d} \sum_{i=1}^k |A''_i| (|A_i| - 1) \le n . \label{eqn:200}
\end{equation}

Using Inequality~\ref{eqn:984} in Inequality~\ref{eqn:200} and simplifying, we obtain
\begin{eqnarray}
 \nonumber \frac{1}{d} \sum_{i=1}^k |A''_i| \left( (\frac{1}{dk }|A''_i|^2 - \frac{1}{d} |A''_i|) - 1 \right) & \le & n, \\
 \nonumber \frac{1}{d^2 k} \sum_{i=1}^k |A''_i|^3 - \sum_{i=1}^k \frac{1}{d^2} |A''_i|^2 - \sum_{i=1}^k \frac{1}{d} |A''_i| & \le & n, \\
 \frac{1}{d^2 k^3} |A''|^3 - \underbrace{\frac{1}{d^2 k} |A''|^2}_{I} - \underbrace{\frac{1}{d} |A''|}_{II} & \le & n. \label{eqn:451}
\end{eqnarray}

Since $|A''| \ge 2dk^2$, we can upper bound the terms $I$ and $II$ from Inequality~\ref{eqn:451} as follows
\begin{eqnarray}
  \frac{1}{2 d^3 k^3} |A''|^3 \ge I , \mbox{ and } \label{eqn:512} \\
  \frac{1}{4 d^3 k^4} |A''|^3 \ge II .\label{eqn:513}
\end{eqnarray}
Using bounds \ref{eqn:512} and \ref{eqn:513} in Inequality~\ref{eqn:451} and simplifying, we obtain
\begin{eqnarray}
 \nonumber \frac{1}{4d^2 k^3} |A''|^3 < n, \Rightarrow \\
 |A''| < 2^{2/3} n^{1/3} d^{2/3} k.  \label{eqn:481}
\end{eqnarray}

Since $\degmax \semi(A'', B, \tilde{S}) = \lceil \frac{|A''|}{k} \rceil$, and using Inequality~\ref{eqn:481}, we conclude that
\begin{equation*}
\degmax \semi(A'', B, \tilde{S}) \le \lceil 2^{2/3} n^{1/3} d^{2/3} \rceil. 
\end{equation*}
\end{enumerate}
\end{enumerate}

Combining the bounds from cases 1, 2a and 2b, the result follows. \qed
\end{proof}

We mention that there are graphs for which adding further semi-matchings $S_{b_1 b_2} = \semi(A_{b_1 b_2}, B, E)$ 
to our $\Order(n^{1/3})$-skeleton, where 
$A_{b_1 b_2}$ is the set of $A$ vertices whose neighborhood in our $\Order(n^{1/3})$-skeleton is the set 
$\{ b_1, b_2 \}$, does not help to improve the quality of the skeleton. 
Before stating our main theorem, Theorem~\ref{thm:communication-ub}, we show in Lemma~\ref{lemma:combining-semi-matchings} 
that if Alice sends a $c$-matching skeleton, then Bob can compute a $c+1$ approximation. Then, we state our main theorem.


\begin{lemma} \label{lemma:combining-semi-matchings}
 Let $G = (A, B, E)$ be a bipartite graph and let $E_1, E_2$ be a partition of the edge set $E$. Furthermore, 
let $E'_1 \subseteq E_1$ such that for any $A' \subseteq A(E_1)$:
\begin{equation*}
\degmax \semi(A(E_1), B, E'_1) \le c \degmax \semi(A(E_1), B, E'_1). 
\end{equation*}
Then:
\begin{equation*}
\degmax \semi(A, B, E'_1 \cup E_2) \le (c+1) \degmax \semi(A, B, E).
\end{equation*}
\end{lemma}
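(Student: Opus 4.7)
The plan is to exhibit a semi-matching in $E'_1 \cup E_2$ whose maximum degree is at most $(c+1) \degmax \semi(A, B, E)$, by using an optimal semi-matching in the full graph to split $A$ into two parts and then handling each part separately.

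More concretely, let $S^* = \semi(A, B, E)$ and set $d^* = \degmax S^*$. Since $E_1, E_2$ is a partition of $E$, every edge of $S^*$ belongs to exactly one of $E_1$ or $E_2$, so I partition $A = A_1 \dot\cup A_2$, where $A_i := \{a \in A : S^*(a) \in E_i\}$. Note that $A_1 \subseteq A(E_1)$, so the skeleton property applies to $A_1$. The restriction $S^*|_{A_1}$ is a semi-matching of $A_1$ using only edges of $E_1$, of maximum degree at most $d^*$; hence $\degmax \semi(A_1, B, E_1) \le d^*$, and by the hypothesis on $E'_1$,
\begin{equation*}
\degmax \semi(A_1, B, E'_1) \le c \cdot \degmax \semi(A_1, B, E_1) \le c d^*.
\end{equation*}
Similarly, $S^*|_{A_2}$ is a semi-matching of $A_2$ using only edges of $E_2$ of maximum degree at most $d^*$, so $\degmax \semi(A_2, B, E_2) \le d^*$.

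Now let $T_1 = \semi(A_1, B, E'_1)$ and $T_2 = \semi(A_2, B, E_2)$. Since $A_1$ and $A_2$ are disjoint, $T_1 \cup T_2$ matches every vertex of $A$ exactly once and therefore is a semi-matching of $G$ that uses only edges in $E'_1 \cup E_2$. For any $b \in B$, its degree in $T_1 \cup T_2$ is at most $\deg_{T_1}(b) + \deg_{T_2}(b) \le c d^* + d^* = (c+1) d^*$. As $\semi(A, B, E'_1 \cup E_2)$ has maximum degree at most that of any semi-matching using edges in $E'_1 \cup E_2$, the desired bound follows.

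No real obstacle is expected here; the only point requiring a little care is that the skeleton hypothesis must be applied to a subset of $A(E_1)$, which is ensured by the construction of $A_1$ from $S^*$.
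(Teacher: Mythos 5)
Your proof is correct and follows essentially the same route as the paper: both split $A$ according to whether each vertex's edge in the optimal semi-matching $S^*$ lies in $E_1$ or $E_2$, apply the skeleton hypothesis to the $E_1$-side, and take a union. The only cosmetic difference is that the paper reuses $S^* \cap E_2$ directly for the $E_2$-side while you re-optimize with $\semi(A_2, B, E_2)$; both give the same degree bound.
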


\begin{proof}
 We construct a semi-matching $S$ between $A$ and $B$ with edges from $E'_1 \cup E_2$ explicitly and
we show that $\degmax S \le (c+1) \degmax \semi(A, B, E)$. Since $\degmax \semi(A, B, E'_1 \cup E_2) \le \degmax S$,
the result then follows.

Let $S_2 = \semi(A, B, E) \cap E_2$, and let $S_1 = \semi(A \setminus A(S_2), B, E_1)$. Then $S = S_1 \cup S_2$.
Clearly, $\degmax S_2 \le \degmax \semi(A, B, E)$. Furthermore, by the premise of the lemma we obtain 
$\degmax S_1 \le c \degmax \semi(A, B, E)$. Since 
$\degmax S \le \degmax S_1 + \degmax S_2$ and $\degmax S_1 + \degmax S_2 \le (c+1) \degmax(A, B, E)$ the
result follows. \qed
\end{proof}

\begin{theorem} \label{thm:communication-ub}
 Let $G = (A, B, E)$ with $n = |A|$ and $m = |B|$. Then there are one-way two party deterministic communication protocols 
for the semi-matching problem, one with
\begin{enumerate}
 \item message size $c n \log m$ and approximation factor $n^{1/2} + 2$, and another one with 
 \item message size $2 c n \log m$ and approximation factor $2n^{1/3} + 2$.
\end{enumerate}
\end{theorem}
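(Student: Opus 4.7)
The plan is to reduce both protocols to the construction of a semi-matching skeleton of Alice's subgraph, transmitted to Bob, who then combines it with his own edges via Lemma~\ref{lemma:combining-semi-matchings}. Let $G_1 = (A(E_1), B, E_1)$ be Alice's subgraph, so $|A(E_1)| \le n$. In both protocols, Alice computes locally a subset $S \subseteq E_1$ that is an $\Order(\sqrt{n})$-skeleton (Part~1) or $\Order(n^{1/3})$-skeleton (Part~2) of $G_1$, encodes and sends $S$, and Bob outputs $\semi(A, B, S \cup E_2)$. Each edge $ab \in S$ can be encoded by its $B$-endpoint alone using $\lceil \log_2 |B| \rceil = \Order(\log m)$ bits, since the $A$-endpoints are determined by a commonly known ordering of $A$; hence the message size is $\Order(|S| \log m)$.

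For the first protocol, Alice takes $S := \semi(A(E_1), B, E_1)$, which has $|S| = |A(E_1)| \le n$ edges, giving message size $\Order(n \log m)$. By Lemma~\ref{lemma:upper-bound-sparsifier-1} applied inside $G_1$, for every $A' \subseteq A(E_1)$,
\[
\degmax \semi(A', B, S) < \sqrt{n}\,(\degmax \semi(A', B, E_1))^{1/2} + 1 \le (\sqrt{n}+1)\,\degmax \semi(A', B, E_1),
\]
(using $\degmax \semi(A', B, E_1) \ge 1$ when $A' \neq \varnothing$). Thus $S$ is a $(\sqrt{n}+1)$-skeleton of $G_1$, and Lemma~\ref{lemma:combining-semi-matchings} guarantees that Bob's output is a $(\sqrt{n}+2)$-approximation.

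For the second protocol, Alice takes $S' := S \cup \bigcup_{b \in B(S)} \semi(\Gamma_S(b), B, E_1)$, where $S = \semi(A(E_1), B, E_1)$. Because the sets $\Gamma_S(b)$ partition $A(E_1)$, the total number of edges is $|S| + \sum_{b \in B(S)} |\Gamma_S(b)| \le 2|A(E_1)| \le 2n$, so the message has size $\Order(n \log m)$. By Lemma~\ref{lemma:upper-bound-sparsifier-2} applied inside $G_1$, for every $A' \subseteq A(E_1)$,
\[
\degmax \semi(A', B, S') \le \lceil 2 n^{1/3}\,\degmax \semi(A', B, E_1) \rceil \le (2n^{1/3}+1)\,\degmax \semi(A', B, E_1),
\]
so $S'$ is a $(2n^{1/3}+1)$-skeleton of $G_1$; Lemma~\ref{lemma:combining-semi-matchings} then yields the $(2n^{1/3}+2)$-approximation.

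The only delicate point to watch is that the skeleton lemmas are phrased for a graph whose whole left side is $A$, whereas here Alice's left side is $A(E_1) \subseteq A$. This is purely bookkeeping: Alice applies the lemmas inside $G_1$, so the role of $n$ in the skeleton bounds is played by $|A(E_1)| \le n$ and the claimed skeleton factors hold a fortiori. Beyond this, the two protocols are immediate applications of Lemmas~\ref{lemma:upper-bound-sparsifier-1}, \ref{lemma:upper-bound-sparsifier-2}, and \ref{lemma:combining-semi-matchings}, with no additional ideas required.
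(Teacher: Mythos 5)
Your proof is correct and follows the same route as the paper: Alice sends the skeleton from Lemma~\ref{lemma:upper-bound-sparsifier-1} (respectively Lemma~\ref{lemma:upper-bound-sparsifier-2}), Bob optimally semi-matches over the union with his edges, and Lemma~\ref{lemma:combining-semi-matchings} gives the $(c+1)$-approximation. The paper's own proof is a terse three lines, and you have simply filled in the routine bookkeeping (exact constants from $\lceil\cdot\rceil$, the edge count giving the $\Order(n\log m)$ and $\Order(2n\log m)$ message sizes, and that applying the skeleton lemmas to Alice's subgraph $G_1$ with $|A(E_1)|\le n$ only helps).
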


\begin{proof}
 Alice computes the skeletons as in Lemma~\ref{lemma:upper-bound-sparsifier-1} or in Lemma~\ref{lemma:upper-bound-sparsifier-2}
and sends them to Bob. Bob computes an optimal semi-matching considering his edges and the edges received from Alice. 
By Lemma~\ref{lemma:combining-semi-matchings} the results follow. \qed
\end{proof}

\subsection{Lower Bounds for Semi-matching-skeletons} 
We present now a lower bound that shows that the skeletons of the previous subsection are essentially optimal.
For an integer $c$, we consider the complete bipartite graph $K_{n, m}$ where $m$ is a carefully chosen value depending on $c$ and $n$.
We show in Lemma~\ref{lemma:graph-existence-1} that for any subset of edges $E'$ of $K_{n, m}$ such that for all 
$a \in A: \deg_{E'}(a) \le c$, there is a subset $A' \subseteq A$ with $|A'| \le m$ such that an optimal semi-matching
that matches $A'$ using edges in $E'$ has a maximal degree of $\Omega(n^{\frac{1}{c+1}})$. Note that since
$|A'| \le m$, there is a matching in $K_{n,m}$ that matches all $A'$ vertices. This implies that such an $E'$
is only an $\Omega(n^{\frac{1}{c+1}})$-skeleton.

\begin{lemma} \label{lemma:graph-existence-1}
 Let $G = (A, B, E)$ be the complete bipartite graph with $|A| = n$ and $|B| = (c!)^{\frac{1}{c+1}} n^{\frac{1}{c+1}}$ for an integer 
$c$. Let $E' \subseteq E$ be an arbitrary subset such that $\forall a \in A: \deg_{E'}(a) \le c$. 
Then there exists an $A' \subseteq A$ with $|A'| \le |B|$ and 
\begin{eqnarray} 
\degmax \semi(A', B, E') \ge  \frac{(c!)^{\frac{1}{c+1}}}{c} n^{\frac{1}{c+1}} > e^{-1.3} n^{\frac{1}{c+1}}. \label{eqn:102}
\end{eqnarray}
\end{lemma}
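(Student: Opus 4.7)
The plan is a two-step pigeonhole argument that exploits the hypothesis $\deg_{E'}(a) \le c$ for every $a \in A$. Write $m := |B| = (c!)^{1/(c+1)} n^{1/(c+1)}$, and notice that the choice of $m$ is engineered to satisfy the identity $m^{c+1} = c! \cdot n$, which will drive the final computation. The idea is to locate, using the fact that each vertex of $A$ has only a tiny neighborhood in $E'$, a large subset of $A$ whose joint $E'$-neighborhood is forced into a fixed set of only $c$ vertices; then a second pigeonhole lower bounds the maximum degree of any semi-matching on that set.

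For each $a \in A$, arbitrarily extend $\Gamma_{E'}(a)$ to a $c$-subset $T(a) \subseteq B$. Since there are only $\binom{m}{c}$ candidate $c$-subsets, by pigeonhole there is a specific $B_0 \in \binom{B}{c}$ with $T(a) = B_0$ for a collection $A_0 \subseteq A$ of size $|A_0| \ge n/\binom{m}{c}$. Every $a \in A_0$ has all of its $E'$-neighbors inside $B_0$, so $\Gamma_{E'}(A_0) \subseteq B_0$. Using $\binom{m}{c} \le m^c/c!$ and the identity $m^{c+1} = c! \cdot n$, one computes $|A_0| \ge n \cdot c!/m^c = m$. Choose any subset $A' \subseteq A_0$ with $|A'| = m$; then $|A'| \le |B|$ as required. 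Since every $E'$-edge incident to $A'$ has its $B$-endpoint in $B_0$, any semi-matching between $A'$ and $B$ using only edges of $E'$ must map the $m$ vertices of $A'$ into the $c$ vertices of $B_0$, and so by a second pigeonhole some vertex of $B_0$ receives at least $m/c$ matches. This yields
\[
\degmax \semi(A', B, E') \;\ge\; \frac{m}{c} \;=\; \frac{(c!)^{1/(c+1)}}{c}\,n^{1/(c+1)},
\]
which is the first inequality of (\ref{eqn:102}).

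The second inequality $(c!)^{1/(c+1)}/c > e^{-1.3}$ is a purely numerical statement: Stirling's bound $c! \ge \sqrt{2\pi c}(c/e)^c$ gives $(c!)^{1/(c+1)}/c \ge e^{-c/(c+1)} c^{-1/(c+1)} (2\pi c)^{1/(2(c+1))}$, which tends to $1/e \approx 0.368 > e^{-1.3} \approx 0.273$ as $c \to \infty$, so only finitely many small values of $c$ need to be verified by direct calculation. The main subtlety in the plan is the maneuver in the first pigeonhole step: padding each (possibly smaller) neighborhood $\Gamma_{E'}(a)$ to a $c$-subset of a common sample space of size $\binom{m}{c}$, rather than bucketing by the actual neighborhood, is what forces the denominator to match $|B|^{c+1}$ exactly and hence yields the clean inequality $|A_0| \ge m$ with no slack, giving the asymptotically optimal exponent $1/(c+1)$.
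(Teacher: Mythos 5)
Your proof is correct and essentially the same as the paper's: both pad each $A$-vertex's $E'$-neighborhood to a $c$-subset of $B$, bucket $A$ by these $c$-subsets, and then pigeonhole twice (once to find a bucket $A_0$ of size at least $|B|$, once to force degree $\ge |A_0|/c$ inside that bucket), with the paper phrasing the first pigeonhole as a counting contradiction and you phrasing it directly. The Stirling estimate for the numerical bound $(c!)^{1/(c+1)}/c > e^{-1.3}$ is also the same.
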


\begin{proof}
 Let $E' \subseteq E$ be as in the statement of the lemma. Let $E''$ be an arbitrary superset of $E'$ such that 
 $\forall a \in A: \deg_{E''}(a) = c$. Since $\degmax \semi(A', B, E'') \le \degmax \semi(A', B, E')$ 
 it is enough to show the lemma for $E''$. Denote by $A_{ \{i_1, \dots, i_c \}}$ the subset of $A$ such that 
$\forall a \in A_{ \{i_1, \dots, i_c \}}: \Gamma_{E''}(a) = \{ b_{i_1}, \dots, b_{i_c} \}$. Then

\begin{equation} 
 |A| = \sum_{\substack{A_i: i = \{ i_1, \dots, i_c \} \text{ and } \\ \{b_{i_1}, \dots, b_{i_c} \} \text{ is a $c$-subset of $B$ } } } |A_i|, \label{eqn:2392}
\end{equation}

since $\forall a \in A: \deg_{E''}(a) =c$. Suppose for the sake of a contradiction that Inequality~\ref{eqn:102} is not 
true. Then for all $A_i$ on the right side of Inequality~\ref{eqn:2392} we have $|A_i| < (c!)^{\frac{1}{c+1}} n^{\frac{1}{c+1}}$. 
There are at most $\binom{|B|}{c}$ such sets. This implies that:
\begin{eqnarray*}
 |A| & \le & \binom{|B|}{c} \cdot (c!)^{\frac{1}{c+1}} n^{\frac{1}{c+1}} < \frac{|B|^c}{c!}  (c!)^{\frac{1}{c+1}} n^{\frac{1}{c+1}}  
 <  \frac{(c!)^{\frac{c}{c+1}} n^{\frac{c}{c+1}}}{c!} (c!)^{\frac{1}{c+1}} n^{\frac{1}{c+1}} = n.
\end{eqnarray*}
 This is a contradiction to the fact that $|A| \ge n$ and proves the first inequality in Inequality~\ref{eqn:102}. 
To proof the second, we apply Stirling's formula, and we obtain

\begin{eqnarray*}
 \frac{(c!)^{\frac{1}{c+1}}}{c} > \frac{(\sqrt{2 \pi} c^{c+1/2} e^{-c})^{\frac{1}{c+1}} }{c} = e^{ \frac{1/2 \ln (2 \pi) - 1/2 \ln(c) - c }{c+1} } . \label{eqn:439}
\end{eqnarray*}

It can be shown that for any $c > 0$, $\frac{1/2 \ln (2 \pi) - 1/2 \ln(c) - c }{c+1} > - 1.3$ which proves the result.
\qed
\end{proof}

We extend Lemma~\ref{lemma:graph-existence-1} now to edge sets of bounded cardinality without restriction on 
the maximal degree of an $A$ node, and we state then our lower-bound result in Theorem~\ref{thm:lb-skeleton}.

\begin{lemma} \label{lemma:graph-existence-2}
 Let $c > 0$ be an integer, let $\epsilon > 0$ be a constant, and let $c' = (1+\epsilon)c$. Let $G = (A, B, E)$ be the complete bipartite graph with $|A| = n$ 
and $|B| = (c'!)^{\frac{1}{c'+1}} (\frac{\epsilon}{1+\epsilon}\cdot n)^{\frac{1}{c'+1}}$. 
 Let $E' \subseteq E$ be an arbitrary subset of size at most $c \cdot n$. Then there exists an $A' \subseteq A$ with $|A'| \le |B|$ and 

\begin{eqnarray} 
\degmax \semi(A', B, E') > e^{-1.3} (\frac{\epsilon}{1+\epsilon}n)^{\frac{1}{c'+1}}. \label{eqn:105}
\end{eqnarray}
\end{lemma}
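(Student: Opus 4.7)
The plan is to reduce Lemma~\ref{lemma:graph-existence-2} to Lemma~\ref{lemma:graph-existence-1} by restricting attention to a large subset of $A$ consisting of vertices whose $E'$-degree is bounded. The hypothesis $|E'| \le cn$ is only a total edge budget, whereas Lemma~\ref{lemma:graph-existence-1} requires the per-vertex bound $\deg_{E'}(a) \le c$. A simple averaging argument converts the former into the latter at the price of discarding roughly a $\frac{1}{1+\epsilon}$ fraction of $A$ and of weakening the per-vertex bound from $c$ to $c' = (1+\epsilon)c$, which is precisely why $c'$ appears in the statement.

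First I would perform the averaging step. Since $\sum_{a \in A} \deg_{E'}(a) = |E'| \le cn$, strictly fewer than $cn/c' = n/(1+\epsilon)$ vertices $a \in A$ can satisfy $\deg_{E'}(a) > c'$. Setting
$$\tilde{A} = \{ a \in A : \deg_{E'}(a) \le c' \},$$
we obtain $|\tilde{A}| > n - n/(1+\epsilon) = \frac{\epsilon}{1+\epsilon}\, n$. I would then pick any subset $\tilde{A}' \subseteq \tilde{A}$ of size $\lceil \frac{\epsilon}{1+\epsilon} n \rceil$. The edge set $E'|_{\tilde{A}' \times B}$ satisfies $\deg(a) \le c'$ for every $a \in \tilde{A}'$, so the complete bipartite graph on $\tilde{A}' \cup B$ together with this edge set fits the hypothesis of Lemma~\ref{lemma:graph-existence-1} with $n$ replaced by $|\tilde{A}'|$ and $c$ replaced by $c'$.

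Applying Lemma~\ref{lemma:graph-existence-1} to this sub-instance would then yield $A^* \subseteq \tilde{A}' \subseteq A$ with $|A^*| \le |B|$ and
$$\degmax \semi(A^*, B, E') > e^{-1.3}\, |\tilde{A}'|^{1/(c'+1)} \ge e^{-1.3} \left( \tfrac{\epsilon}{1+\epsilon}\, n \right)^{1/(c'+1)},$$
which is exactly Inequality~\ref{eqn:105}. The one mild obstacle is that Lemma~\ref{lemma:graph-existence-1} is phrased with the \emph{equality} $|B| = (c'!)^{1/(c'+1)}|\tilde{A}'|^{1/(c'+1)}$, whereas in our setting $|B| = (c'!)^{1/(c'+1)}(\frac{\epsilon}{1+\epsilon} n)^{1/(c'+1)}$ is slightly smaller (because $|\tilde{A}'| \ge \frac{\epsilon}{1+\epsilon} n$). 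Inspecting the proof of Lemma~\ref{lemma:graph-existence-1}, however, the value of $|B|$ is only used as an upper bound in the final chain of inequalities that derives the contradiction $|A| < n$, so the argument is unaffected when $|B|$ is smaller; alternatively one could pad $\tilde{A}'$ with isolated dummy $A$-vertices to make the size match exactly. Either way, the lemma follows.
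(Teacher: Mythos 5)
Your proof is correct and takes essentially the same approach as the paper: partition $A$ by $E'$-degree, use averaging to show the low-degree part $\{a : \deg_{E'}(a) \le c'\}$ has size at least $\frac{\epsilon}{1+\epsilon}n$, and apply Lemma~\ref{lemma:graph-existence-1} with parameter $c'$ to that subgraph. Your explicit remark about the $|B|$-mismatch (that Lemma~\ref{lemma:graph-existence-1}'s proof only uses $|B|$ as an upper bound in the binomial estimate, so the conclusion survives when $|B|$ is smaller than the formula demands) is a point the paper passes over silently, and it does need saying since the sub-instance has strictly more $A$-vertices than the formula for $|B|$ presupposes.
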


\begin{proof}
 Split $A$ into $A_{>}$ and $A_{\le}$ such that for all $a \in A_{>}: \deg_{S'}(a) > c'$, and for all 
 $a \in A_{\le}: \deg_{S'}(a) \le c'$.
 Then $|A_{>}| c' + |A_{\le}| \le c n$ which implies that $|A_{\le}| \ge \frac{\epsilon}{1+\epsilon} n$. 
Let $G' = G|_{A_{\le} \times B}$. Then by Lemma~\ref{lemma:graph-existence-1} applied on $G'$
there is a subset $A' \subseteq A_{\le}$ with $|A'| \le |B|$ such that 
\begin{equation*}
 \degmax \semi(A', B, E'|_{A_{\le} \times B}) > e^{-1.3} |A_{\le}|^{\frac{1}{c'+1}},
\end{equation*}
and since $\degmax \semi(A', B, E'|_{A_{\le} \times B}) = \degmax \semi(A', B, E')$, the result follows. \qed
\end{proof}

\begin{theorem} \label{thm:lb-skeleton}
 Let $c > 0$ be an integer. Then for all $\epsilon > 0$, an $\Order(n^\frac{1}{(1+\epsilon)c + 1})$-semi-matching skeleton requires
at least $cn$ edges.
\end{theorem}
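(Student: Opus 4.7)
The plan is to derive Theorem~\ref{thm:lb-skeleton} as a short corollary of Lemma~\ref{lemma:graph-existence-2}. The core technical work has already been carried out in the preceding lemmas; what remains is to repackage the conclusion in the language of skeletons and to exhibit a witness graph on which an optimal semi-matching is trivially of maximum degree~$1$, so that the approximation factor of any candidate skeleton is forced to match the degree on the sparsified side.

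First I would fix the witness graph to be the complete bipartite graph $G = (A, B, E) = K_{n,m}$ with $|A| = n$ and $|B| = m = (c'!)^{1/(c'+1)} \bigl(\tfrac{\epsilon}{1+\epsilon}\, n\bigr)^{1/(c'+1)}$, where $c' = (1+\epsilon)c$, exactly as in the statement of Lemma~\ref{lemma:graph-existence-2}. The crucial observation about this graph is that for every $A' \subseteq A$ with $|A'| \le m$ the optimal semi-matching $\semi(A', B, E)$ is just an injective assignment of the $A'$-vertices to distinct $B$-vertices, so $\degmax \semi(A', B, E) = 1$. This is what collapses the skeleton inequality to a pure lower bound on $\degmax \semi(A', B, E')$ for the restricted edge set.

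Second, I would argue by contraposition. Suppose that $E' \subseteq E$ is a semi-matching skeleton of $G$ with $|E'| \le cn$, and let $f$ denote its approximation factor. Lemma~\ref{lemma:graph-existence-2} applied to $E'$ produces an $A' \subseteq A$ with $|A'| \le m$ such that $\degmax \semi(A', B, E') > e^{-1.3} \bigl(\tfrac{\epsilon}{1+\epsilon}\, n\bigr)^{1/((1+\epsilon)c+1)}$. The defining skeleton property applied to this very $A'$ gives $\degmax \semi(A', B, E') \le f \cdot \degmax \semi(A', B, E) = f$, using the trivial-optimum observation from the previous paragraph. Chaining the two inequalities yields $f = \Omega\bigl(n^{1/((1+\epsilon)c+1)}\bigr)$, with explicit constant $e^{-1.3}(\epsilon/(1+\epsilon))^{1/((1+\epsilon)c+1)}$. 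Hence no $\Order\bigl(n^{1/((1+\epsilon)c+1)}\bigr)$-skeleton whose hidden constant is smaller than this threshold can have at most $cn$ edges, which is precisely the statement of the theorem.

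I do not anticipate any genuine obstacle: the proof is essentially a one-line application of Lemma~\ref{lemma:graph-existence-2} to $K_{n,m}$. The only point that deserves attention is to make the constant hidden in the $\Order$-notation in the theorem statement explicit, so that the chain of inequalities witnessing the contradiction is fully rigorous; this is a book-keeping matter rather than a mathematical one.
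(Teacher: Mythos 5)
Your proposal is correct and is exactly the intended argument: the paper states that Theorem~\ref{thm:lb-skeleton} follows from Lemma~\ref{lemma:graph-existence-2} by noting (as you do) that in $K_{n,m}$ any $A'$ with $|A'|\le m$ admits a perfect matching, so $\degmax \semi(A', B, E)=1$ and the skeleton inequality reduces to the degree lower bound of that lemma. The only clarification worth keeping, which you already flag, is that the $\Order(\cdot)$ in the theorem statement must be read with the explicit threshold constant $e^{-1.3}\bigl(\tfrac{\epsilon}{1+\epsilon}\bigr)^{1/((1+\epsilon)c+1)}$ coming from Lemma~\ref{lemma:graph-existence-2}.
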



%

\subsection{One-way, two party communication lower bound} 
To prove a lower bound on the deterministic communication complexity we define 
a family of bipartite graphs. For given integers $n$ and $m$, 
let $\mathcal{G}_1 = \{ G_1(x) |  x \in \{0,1\}^{n\times m}\}$ be defined as 
follows. 
Let $B_0 = \{b^0_1, \dots, b^0_m \}$, $B_1 = \{b^1_1, \dots, b^1_m \}$ and
 $A =\{a_1, \dots, a_n \}$.
Given $x \in \{0,1\}^{n \times m}$, let $E_x = \{ (a_i, b^{x_{i,j}}_j) \, | \, 1\leq i \leq n, 1 \leq j \leq m \}$
(i.e, the entries of the matrix $x$ determine if there is an edge $(a_i, b^0_j)$  or
an edge  $(a_i, b^1_j)$ for all $i,j$). Then, we define $G_1(x)=(A , B_0 \cup B_1, E_x)$. 
%
%
%
From  Lemma~\ref{lemma:graph-existence-2} we immediately obtain the following lemma. 

\begin{lemma} \label{lemma:graph-existence-3}
 Let $c > 0$ be an integer, let $\epsilon > 0$ be a constant, and let $c' = (1+\epsilon)c$. Let $n$ be a sufficiently large integer,
 and let $m = (c'!)^{\frac{1}{c'+1}} (\frac{\epsilon}{1+\epsilon}\cdot n)^{\frac{1}{c'+1}}$.
 Let $G=(A, B_0\cup B_1,E)$ be a graph $G \in \mathcal{G}_1$, and 
 let $E' \subseteq E$ be such that $|E'| \leq cn$. Then there exists a set of nodes 
 $A' \subseteq A$ with $|A'| \le m$ and 
 $\degmax \semi(A', B_0 \cup B_1, E') > 1 / 2e^{-1.3} (\frac{\epsilon}{1+\epsilon}n)^{\frac{1}{c'+1}}$.

\end{lemma}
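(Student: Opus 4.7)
The plan is to reduce Lemma~\ref{lemma:graph-existence-3} to Lemma~\ref{lemma:graph-existence-2} by collapsing each pair $\{b^0_j, b^1_j\}$ into a single vertex. Concretely, I would define the quotient graph $\tilde{G} = (A, \tilde{B}, \tilde{E})$ where $\tilde{B} = \{b_1, \ldots, b_m\}$ and every edge $(a_i, b^{x_{i,j}}_j) \in E$ is sent to $(a_i, b_j)$. By the very definition of $\mathcal{G}_1$, for each pair $(i,j)$ exactly one of $(a_i, b^0_j)$ and $(a_i, b^1_j)$ lies in $E$, so the projection is injective on $E$ and $\tilde{G}$ is precisely the complete bipartite graph $K_{n,m}$. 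Projecting $E' \subseteq E$ yields $\tilde{E'} \subseteq \tilde{E}$ with $|\tilde{E'}| = |E'| \le cn$.

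Next I would apply Lemma~\ref{lemma:graph-existence-2} to $\tilde{G}$ and $\tilde{E'}$: with the same $c$, $c'$ and $\epsilon$, and $|\tilde{B}| = m$ matching the value required by that lemma, we obtain a set $A' \subseteq A$ with $|A'| \le m$ such that
\begin{equation*}
\degmax \semi(A', \tilde{B}, \tilde{E'}) > e^{-1.3} \bigl(\tfrac{\epsilon}{1+\epsilon} n\bigr)^{\frac{1}{c'+1}}.
\end{equation*}
The final step is to transfer this lower bound back to the original graph. To do this, I would take any semi-matching $S \subseteq E'$ of $A'$ in $G$ and consider its image $\tilde{S} \subseteq \tilde{E'}$ under the projection; $\tilde{S}$ is clearly a semi-matching of $A'$ in $\tilde{G}$, and for every $j$,
\begin{equation*}
\deg_{\tilde{S}}(b_j) = \deg_{S}(b^0_j) + \deg_{S}(b^1_j) \le 2 \degmax S.
\end{equation*}
Hence $\degmax \tilde{S} \le 2 \degmax S$. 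Applying this to $S = \semi(A', B_0 \cup B_1, E')$ yields $\degmax \semi(A', \tilde{B}, \tilde{E'}) \le 2 \degmax \semi(A', B_0 \cup B_1, E')$, and combining with the bound from Lemma~\ref{lemma:graph-existence-2} gives the desired inequality.

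The argument is essentially bookkeeping once one sees the quotient construction; the only point that requires care is verifying that the projection is a bijection between $E$ and the edges of $K_{n,m}$ (so that the cardinality bound $|\tilde{E'}| \le cn$ is preserved without loss) and that a $b_j$ with degree $d$ in $\tilde{S}$ can be recovered from at most two vertices $b^0_j, b^1_j$ in $S$, which costs the factor of $2$ in the final bound. I do not anticipate a real obstacle beyond this bookkeeping, since Lemma~\ref{lemma:graph-existence-2} already delivers the essential combinatorial content.
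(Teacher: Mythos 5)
Your proof is correct and takes essentially the same approach as the paper, which states that Lemma~\ref{lemma:graph-existence-3} follows ``immediately'' from Lemma~\ref{lemma:graph-existence-2} without spelling out the reduction. The two-to-one collapse of $B_0\cup B_1$ onto a single copy of $B$ is exactly the intended argument, and the factor $1/2$ in the stated bound is precisely the loss you account for when transferring the degree bound back through the projection.
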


We further define a second family of bipartite graphs  $\mathcal{G}_2$  on the sets of nodes
$A$ and $C$, $|A|=|C|=n$.
 For a set $A' \subseteq A$  we define the graph $G_2(A')$
to be an arbitrary matching from all the nodes of  $A'$ to nodes of  $C$. 
The family of graphs  $\mathcal{G}_2$  is defined as $\mathcal{G}_2 = \{ G_2(A') | A' \subseteq A \}$.

%

Our lower bound will be proved using a family of graphs $\mathcal{G}$. Slightly abusing 
notation, the family of graphs  $\mathcal{G}$ is defined as 
$\mathcal{G} = \mathcal{G}_1 \times \mathcal{G}_2$. That is, the graphs in 
$\mathcal{G}$ are all graphs $G=(A,B_0 \cup B_1 \cup C, E_1 \cup E_2)$ built from  a graph
$G_1=(A , B_0  \cup B_1, E_1)  \in\mathcal{G}_1$ and a graph  $G_2=(A, C, E_1) \in \mathcal{G}_2$ 
where the set of nodes $A$  is the same for $G_1$ and $G_2$.
We now prove our lower bound.

\begin{theorem} \label{thm:communicatio-lb}
 Let $c > 0$ be an integer and  let $\epsilon>0$ be an arbitrarily small constant.
 Let ${\cal P}$ be a $\beta$-approximation one-way two-party protocol for semi matching that 
 has communication complexity at most $\alpha$. 
 If  $ \beta  \leq  \gamma = 1/2 \frac{1}{e^{1.3}} (\frac{\epsilon}{\epsilon + 1} n )^{\frac{1}{(1+\epsilon)c + 1}}$,
  then $\alpha > cn$, where $n$ is the number of nodes to be matched. 
 \end{theorem}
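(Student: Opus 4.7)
The plan is to prove the contrapositive: assume the communication is $\alpha \le cn$ bits and derive that the protocol's approximation ratio must exceed $\gamma$. The strategy has two pieces. A pigeonhole over the $2^{nm}$ possible Alice inputs in $\mathcal{G}_1$ identifies a large set $S_{M^*}$ of Alice inputs that all send the same message $M^*$. Since the protocol's output depends only on $M^*$ and Bob's input, any output edge between $A$ and $B_0 \cup B_1$ must lie in $E^*(M^*) := \bigcap_{x \in S_{M^*}} E_x$, because this output must be correct for \emph{every} $x \in S_{M^*}$.

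The first key step is a bit-counting bound on $|E^*(M^*)|$. Each coordinate $(i,j)$ contributes an edge to $E^*(M^*)$ only if all $x \in S_{M^*}$ agree on $x_{i,j}$; letting $U$ be the set of coordinates on which the matrices in $S_{M^*}$ disagree, the elements of $S_{M^*}$ are determined by their restriction to $U$, so $|S_{M^*}| \le 2^{|U|}$. Combined with $|S_{M^*}| \ge 2^{nm-\alpha}$ from pigeonhole, this yields $|U| \ge nm - \alpha$, and hence $|E^*(M^*)| = nm - |U| \le \alpha \le cn$.

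The second key step applies Lemma~\ref{lemma:graph-existence-3} to $E^*(M^*)$ (which is a subset of $E_x$ for any $x \in S_{M^*}$, so the lemma is applicable): there exists $\tilde{A} \subseteq A$ with $|\tilde{A}| \le m$ such that every semi-matching of $\tilde{A}$ into $B_0 \cup B_1$ using $E^*(M^*)$ has maximum degree strictly greater than $\gamma$. I then choose Bob's input to be $G_2(A \setminus \tilde{A}) \in \mathcal{G}_2$: since the vertices in $\tilde{A}$ have no edges to $C$ in this Bob input and no edges outside $B_0 \cup B_1$ in $E_x$, they must be matched entirely through $E^*(M^*)$, forcing the protocol's output to have maximum degree $> \gamma$ on the $\tilde{A}$-side alone.

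Finally I verify that the optimal semi-matching on the combined graph $G_1(x) \cup G_2(A \setminus \tilde{A})$ has maximum degree $1$ for any $x \in S_{M^*}$: match $A \setminus \tilde{A}$ via Bob's perfect matching into $C$, and match $\tilde{A}$ injectively into $B_0 \cup B_1$. The latter matching exists by Hall's theorem, because every $a \in A$ has exactly $m$ neighbors in $B_0 \cup B_1$ (one per column $j$), so for every $A'' \subseteq \tilde{A}$ we have $|\Gamma(A'')| \ge m \ge |A''|$. The approximation ratio is then $> \gamma$, contradicting $\beta \le \gamma$. The main obstacle is the first step: making precise the notion of ``edges committed by $M^*$'' and showing cleanly that this set is actually small in terms of $\alpha$; once the bound $|E^*(M^*)| \le \alpha$ is established, the remainder is a direct application of Lemma~\ref{lemma:graph-existence-3} and Hall's theorem.
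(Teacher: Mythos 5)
Your proposal is correct and follows essentially the same approach as the paper's proof: pigeonhole over Alice's $2^{nm}$ inputs to find a message class of size at least $2^{nm-\alpha}$, a counting argument showing the common edge set has at most $cn$ edges, Lemma~\ref{lemma:graph-existence-3} to find a bad subset $A'$, and a Bob input $G_2(A\setminus A')$ that isolates $A'$ so the output must rely only on those common edges while a perfect matching still exists. The only cosmetic differences are that you spell out the disagreement-set bound $|U|\ge nm-\alpha$ and the Hall's-theorem check explicitly, where the paper states these facts more tersely.
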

 \begin{proof}
Take $n$ sufficiently large. Let $c'=(1+\epsilon)c$ and let 
$m= (c'!)^{\frac{1}{c'+1}} (\frac{\epsilon}{1+\epsilon}\cdot n)^{\frac{1}{c'+1}}$.
  We consider as  possible inputs the graphs in  $\mathcal{G}$ 
  (for $n$ and $m$). 
 Given an input graph, Alice will get as input all edges between $A$ and $B_0 \cup B_1$
 (i.e., a graph in $\mathcal{G}_1$) and Bob will get all edges between $A$ and $C$ (i.e., 
 a graph in  $\mathcal{G}_2$)

 Assume towards a contradiction that the communication complexity of ${\cal P}$ is at most
 $cn$.
Then  there is a set of graphs ${\cal G}^* \subseteq {\cal G}_1$,
$|{\cal G}^*| \geq 2^{nm-cn}$, such that on all graphs in ${\cal G}^*$  Alice sends the
 same message to Bob.  Consider  the set $X^* \subseteq \{0,1\}^{n\times m}$ such
 that  ${\cal G}^* = \{ G_1(x) \, | \, x \in X^*\}$,
 Since there is a one-to-one correspondence between ${\cal G}^*$ and $X^*$,
 $|X^*| \geq 2^{nm-cn}$, and there are at most $cn$ entries which are constant 
 over all matrices in $X^*$, otherwise $|X^*| < 2^{nm-cn}$.
 This means that there are at most  $cn$ edges that exist in all graphs in ${\cal G}^*$. Let $E'$ be the
 set of all these edges. 
 

 Consider now the graph $G=(A,B_0 \cup B_1, E')$.  Since $|E'| \leq cn$, 
 by Lemma~\ref{lemma:graph-existence-3} there exists a set  $A' \subseteq A$ with $|A'| \le m$ and 
 $\degmax \semi(A', B_0 \cup B_1, E') > \gamma$. We now define 
 $G_2^* \in {\cal G}_2$ to be $G_2^*=G_2(A \setminus A')$.
  
Now observe that on any of $G \in {\cal G}^* \times \{ G_2^* \} \subseteq {\cal G}$, 
${\cal P}$ gives the same output semi-matching $S$. 
$S$ can include, 
as edges matching 
the nodes in $A'$,  
only edges from $E'$, since for any other edge
there exists an input in $ {\cal G}^* \times  \{G_2^*\}$ in which that edge does not exist and 
$ {\cal P}$ would  not be correct on that input. It follows (by  Lemma~\ref{lemma:graph-existence-3}) 
that the maximum degree of $S$ is greater than $\gamma$. 
On the other hand, since $|A'| \leq m$, there is a perfect matching in any graph in ${\cal G}^* \times \{G_2^*\}$.
The approximation ratio of ${\cal P}$ is therefore greater than $\gamma$. A contradiction. \qed
\end{proof}

\section{The Structure of Semi-Matchings} \label{section:structure}

We now present our results concerning the structure of semi-matchings. 
Firstly, we show 
in Lemma~\ref{lemma:no-length-2-deg-min-path} that
a semi-matching that does not admit length $2$ degree-minimizing paths can be decomposed into
maximal matchings. In Lemma~\ref{lemma:no-deg-min-path},
we show that if a semi-matching does not admit \textit{any} degree-minimizing paths, then there is
a similar decomposition into maximum matchings.

Lemma~\ref{lemma:no-length-2-deg-min-path} is then used to prove that semi-matchings that do not
admit length $2$ degree-minimizing paths approximate optimal semi-matchings within a factor
$\lceil \log(n + 1) \rceil$. To this end, we firstly show in Lemma~\ref{lemma:matching-half-of-the-nodes} 
that the first $d^*$ maximal matchings of the decomposition of such a
semi-matching match at least $1/2$ of the $A$ vertices, where $d^*$ is the maximal degree
of an optimal semi-matching. 
In Theorem~\ref{theorem:log-n-approximation}, we then apply this result $\lceil \log(n + 1) \rceil$ times, 
showing that the maximal degree of a semi-matching that does not admit length $2$ degree-minimizing 
paths is at most $\lceil \log(n + 1) \rceil$ times the maximal degree of an optimal semi-matching.

\begin{lemma} \label{lemma:no-length-2-deg-min-path}
 Let $S = \semi_2(A, B, E)$ be a semi-matching in $G$ that does not admit a length $2$ degree-minimizing path, 
and let $d = \degmax S$. Then $S$ can be partitioned into $d$ matchings $M_1, \dots, M_d$ such that 
\begin{eqnarray*}
\forall i: M_i  \mbox{ is a maximal matching in } G|_{A_i \times B_i}, 
\end{eqnarray*}

where $A_1 = A$, $B_1 = B$, and for $i > 1: A_i = A \setminus \bigcup_{1 \le j < i} A(M_j)$ and $B_i = B(M_{j-1})$.
\end{lemma}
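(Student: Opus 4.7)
The plan is to build the decomposition by an iterative peel. Set $S^{(0)} := S$ and, for $i \ge 1$, let $M_i$ consist of one arbitrarily chosen edge of $S^{(i-1)}$ incident to each $b \in B(S^{(i-1)})$; then put $S^{(i)} := S^{(i-1)} \setminus M_i$. Since every $a \in A$ has exactly one edge in $S$ (and hence at most one in any $S^{(i-1)}$), $M_i$ is automatically a matching. A one-line induction gives $\deg_{S^{(i)}}(b) = \max\{0, \deg_S(b) - i\}$, so $S^{(d)} = \varnothing$ and $M_1, \dots, M_d$ partition $S$.

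Next, I would verify the set-containment $M_i \subseteq A_i \times B_i$. For the $A$-side, any edge $(a, S(a)) \in M_i \subseteq S^{(i-1)}$ was not selected in any $M_j$ with $j < i$; since $(a, S(a))$ is $a$'s unique edge in $S$, we get $a \notin A(M_j)$ for $j < i$, i.e.\ $a \in A_i$. For the $B$-side (with $i \ge 2$), $B(M_i) \subseteq B(S^{(i-1)}) \subseteq B(S^{(i-2)}) = B(M_{i-1}) = B_i$, where the middle inclusion uses $S^{(i-1)} \subseteq S^{(i-2)}$ and the last equality is the defining property of $M_{i-1}$.

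The crux, and the only place where the hypothesis on $S$ is used, is maximality of $M_i$ in $G|_{A_i \times B_i}$. Suppose toward contradiction that there is an edge $(a, b) \in E$ with $a \in A_i \setminus A(M_i)$ and $b \in B_i \setminus B(M_i)$, and write $b^\star := S(a) \neq b$. Because $a$ survived $i$ peeling rounds, the edge $(a, b^\star)$ still lies in $S^{(i-1)}$; because $M_i$ picks some edge incident to $b^\star$ but not $(a, b^\star)$, we get $\deg_{S^{(i-1)}}(b^\star) \ge 2$, and since $b^\star$ has positive degree in each $S^{(j-1)}$ with $j \le i$ we have $b^\star \in B(M_j)$ for every $j < i$, yielding $\deg_S(b^\star) \ge i + 1$. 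Symmetrically, $b \in B(M_{i-1})$ combined with $b \notin B(M_i)$ forces $\deg_{S^{(i-1)}}(b) = 0$ and $\deg_{S^{(i-2)}}(b) = 1$, so $b$ was picked in each of $M_1, \ldots, M_{i-1}$ and in none thereafter, giving $\deg_S(b) = i - 1$. Thus $\deg_S(b^\star) \ge i + 1 > i - 1 = \deg_S(b)$, and together with $(a, b^\star) \in S$ and $(a, b) \notin S$ this exhibits the length-two degree-minimizing path $(b^\star, a, b)$ in $S$, contradicting $S = \semi_2(A, B, E)$.

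The main obstacle is the joint degree bookkeeping in the last paragraph: one must pin down $\deg_S(b^\star)$ and $\deg_S(b)$ exactly from the survival/failure-to-survive conditions of $a$ and $b$. The unifying observation is that in the peeling construction a vertex $b' \in B$ lies in $B(M_j)$ iff $\deg_{S^{(j-1)}}(b') \ge 1$, so the rounds at which $b'$ is picked form an initial segment of length exactly $\deg_S(b')$; applying this to $b^\star$ and to $b$ yields the two needed degree equalities at once.
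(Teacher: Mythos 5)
Your proposal is correct and follows essentially the same approach as the paper's proof: the iterative "peel one edge per $B$-vertex" construction is precisely the paper's arbitrary labeling of each $b$'s incident edges by $1,\dots,\deg_S(b)$ (your round index is the paper's label), and the maximality argument is the same --- given an unmatched edge $(a,b)$ in $G|_{A_i \times B_i}$, the $S$-partner $b^\star$ of $a$ has $\deg_S(b^\star) \ge i+1$ while $\deg_S(b) \le i-1$, so $(b^\star, a, b)$ is a length-two degree-minimizing path. Your write-up is just more explicit about the degree bookkeeping via the identity $\deg_{S^{(i)}}(b) = \max\{0, \deg_S(b)-i\}$.
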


\begin{proof}
 The matchings $M_1, \dots, M_d$ can be obtained as follows. For each $b \in B(S)$, label its incident
edges in $S$ by $1, 2, \dots, \deg_S(b)$ arbitrarily. Matching $M_i$ is then the subset of edges of $S$ that are labeled
by $i$. 

We prove the statement by contradiction. Let $i$ be the smallest index such that $M_i$ is not maximal in 
$G|_{A_i \times B_i}$. Then there exists an edge $e = ab \in E$ with $a \in A_i$ and $b \in B_i$ 
such that $M_i \cup \{e \}$ is a matching in $G|_{A_i \times B_i}$. Note that $\deg_S(b) < i$ since $b$ is not matched in $M_i$. 
Consider now the edge $e' \in S$ matching the node $a$ to $b'$ in $S$. Since $a \in A_i$ and $a$ is not matched in $M_i$, 
$e'$ is in a matching $M_j$ with $j > i$ and hence $\deg_S(b') \ge j > i$. Then $P = (b', a, b)$
is a length $2$ degree-minimizing path since $\deg_S(b') > i$ and $\deg_S(b) < i$ contradicting our assumption. \qed
\end{proof}

\begin{lemma} \label{lemma:no-deg-min-path}
 Let $S^* = \semi(A, B, E)$ be a semi-matching in $G$ that does not admit degree-minimizing paths of any length, and let $d^* = \degmax S^*$. 
Then $S^*$ can be partitioned into $d^*$ matchings 
$M_1, \dots, M_{d^*}$ such that
\begin{eqnarray*}
\forall i: M_i  \mbox{ is a maximum matching in } G|_{A_i \times B_i}, 
\end{eqnarray*}
where $A_1 = A$, $B_1 = B$, and for $i > 1: A_i = A \setminus \bigcup_{1 \le j < i} A(M_j)$ and $B_i = B(M_{j-1})$.
\end{lemma}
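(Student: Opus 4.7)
The plan is to use the same decomposition as in Lemma~\ref{lemma:no-length-2-deg-min-path} and upgrade the conclusion from ``maximal'' to ``maximum'' by invoking Berge's theorem, making essential use of the stronger hypothesis on $S^*$. Concretely, I would define $M_1, \dots, M_{d^*}$ exactly as in the proof of Lemma~\ref{lemma:no-length-2-deg-min-path}: for each $b \in B(S^*)$, label its $S^*$-edges arbitrarily by $1, 2, \dots, \deg_{S^*}(b)$, and let $M_i$ be the set of $S^*$-edges whose label is $i$. Since $S^*$ in particular admits no length-$2$ degree-minimizing paths, Lemma~\ref{lemma:no-length-2-deg-min-path} already tells us that each $M_i$ is a maximal matching in $G|_{A_i \times B_i}$.

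It remains to show that each $M_i$ is actually a maximum matching in $G|_{A_i \times B_i}$. By Berge's theorem, it suffices to prove that $M_i$ admits no $M_i$-augmenting path in this graph. I would assume, for contradiction, that $P = (a_0, b_1, a_1, b_2, \dots, a_{k-1}, b_k)$ is such an augmenting path, so $a_0 \notin A(M_i)$, $b_k \notin B(M_i)$, the edges $b_j a_j$ lie in $M_i$ for $1 \le j \le k-1$, and the edges $a_j b_{j+1}$ lie outside $M_i$ for $0 \le j \le k-1$. The strategy is to extract from $P$ a length-$2$ degree-minimizing path of $S^*$, contradicting the hypothesis that $S^*$ admits no degree-minimizing path of any length.

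The essential $S^*$-degree information is the following. Since $b_k \in B_i \setminus B(M_i) = B(M_{i-1}) \setminus B(M_i)$, the labelling forces $\deg_{S^*}(b_k) = i-1$; since $a_0 \in A_i \setminus A(M_i)$, its unique $S^*$-partner $b_0$ carries a label strictly greater than $i$, so $\deg_{S^*}(b_0) \ge i+1$; and every $b_j$ with $1 \le j \le k-1$ lies in $B(M_i)$ and hence has $\deg_{S^*}(b_j) \ge i$. If $k \ge 2$, I would use the triple $(b_{k-1}, a_{k-1}, b_k)$: the edge $b_{k-1} a_{k-1}$ is in $M_i \subseteq S^*$, the edge $a_{k-1} b_k$ lies outside $S^*$ because the unique $S^*$-neighbour of $a_{k-1}$ is $b_{k-1} \ne b_k$, and $\deg_{S^*}(b_{k-1}) \ge i > i-1 = \deg_{S^*}(b_k)$; this is exactly a length-$2$ degree-minimizing path of $S^*$. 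If $k=1$, the augmenting path is just the single edge $a_0 b_1$, and I would use the triple $(b_0, a_0, b_1)$ instead: $b_0 \ne b_1$ because their $S^*$-degrees differ ($\ge i+1$ versus $i-1$), so $a_0 b_1 \notin S^*$, and the degree inequality $\deg_{S^*}(b_0) > \deg_{S^*}(b_1)$ is immediate. In either case we obtain the required contradiction, so $M_i$ has no augmenting path and is therefore maximum.

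No step looks genuinely difficult; the only mild subtlety is the corner case $k = 1$, where the augmenting path consists of a single edge and the required $B$-endpoint of the degree-minimizing path has to be found outside $P$, namely as the $S^*$-partner of $a_0$. Everything else reduces to reading off from the construction which labels the vertices $a_0, b_1, \dots, b_k$ carry, and then observing that an $M_i$-augmenting path whose last $B$-vertex has $S^*$-degree $i-1$ necessarily exhibits a length-$2$ drop in $\deg_{S^*}$ somewhere along the alternation.
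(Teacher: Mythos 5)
Your construction of the matchings $M_i$ is the same as the paper's, but the way you try to close the contradiction is different and, as written, has a genuine gap in the case $k\ge 2$. You aim to extract a \emph{length-two} degree-minimizing path from the augmenting path, whereas the paper prepends $b_0$ (the $S^*$-partner of the unmatched endpoint $a_0$) to the entire augmenting path and exhibits a \emph{long} degree-minimizing path $(b_0,a_0,b_1,\dots,a_{k-1},b_k)$.

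The problem with your shortcut is quantitative. Recall from the paper's informal description and from the way Lemma~\ref{lemma:no-length-2-deg-min-path} invokes it that a length-two degree-minimizing path $(b,a,b')$ must have $\deg_{S^*}(b)\ge \deg_{S^*}(b')+2$ (otherwise flipping the two edges does not improve the semi-matching). In your $k\ge 2$ case you only establish $\deg_{S^*}(b_{k-1})\ge i$ and $\deg_{S^*}(b_k)=i-1$, so the gap may be exactly $1$, and then $(b_{k-1},a_{k-1},b_k)$ is \emph{not} a degree-minimizing path. Your $k=1$ case works precisely because there the gap jumps from $\ge i+1$ to $=i-1$, and the paper's full-path construction works for the same reason: the two end $B$-vertices of $(b_0,a_0,b_1,\dots,b_k)$ have degrees $\ge i+1$ and $=i-1$.

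That your $k\ge 2$ step cannot be repaired without using the stronger hypothesis (no degree-minimizing path of \emph{any} length) can be seen by a small example: take $A=\{a_1,a_2,a_3\}$, $B=\{b_1,b_2,b_3\}$, edges $a_1b_1,a_2b_1,a_2b_2,a_3b_2,a_3b_3$, and the semi-matching $S=\{a_1b_1,a_2b_1,a_3b_2\}$ with degrees $2,1,0$. This $S$ admits no length-two degree-minimizing path, yet the matching $M_1=\{a_1b_1,a_3b_2\}$ obtained by your labelling is not maximum (the augmenting path $a_2,b_2,a_3,b_3$ has $k=2$); the only degree-minimizing path is the length-four $(b_1,a_2,b_2,a_3,b_3)$, which is exactly what the paper's prepending construction produces. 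So the hypothesis of Lemma~\ref{lemma:no-deg-min-path} really is needed in full strength, and the contradiction has to come from the long path, not from a length-two one. To fix your write-up, replace the $k\ge 2$ triple by the full path $(b_0,a_0,b_1,\dots,a_{k-1},b_k)$ and argue as the paper does.
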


\begin{proof}
 The proof is similar to the proof of Lemma~\ref{lemma:no-length-2-deg-min-path}. The matchings 
$M_1, \dots, M_{d^*}$ can be obtained as follows. For each $b \in B(S)$, label its incident
edges in $S$ by $1, 2, \dots, \deg_{S^*}(b)$ arbitrarily. Matching $M_i$ is then the subset of edges of $S$ that are labeled
by $i$. 

We prove the statement by contradiction. Let $i$ be the smallest index such that $M_i$ is not a maximum matching in 
$G|_{A_i \times B_i}$. Then there exists an augmenting path $A = (a_1, b_1, \dots a_l, b_l)$ such that for all
$j < l: (a_{j+1}, b_j) \in M_i$ and $\forall i: (a_i, b_i) \notin M_i$. Let $b'$ be the match of $a_1$ in $S^*$. 
Since $a_1 \in A_l$, $\deg_{S^*}(b') > i$. Since $b_l \in B_i$ and $b_l$ is not matched in $M^*_i$, 
$\deg_{S^*}(b_l) < i$. Then $P = (b', a_1, b_1, \dots, a_l, b_l)$ is a degree-minimizing path contradicting our
assumption. \qed
\end{proof}

We firstly prove a lemma that is required in the proof of Theorem~\ref{theorem:log-n-approximation}.

\begin{lemma} \label{lemma:matching-half-of-the-nodes}
  Let $A' \subseteq A$, let $S = \semi_2(A', B, E)$ be a semi-matching in $G|_{A' \times B}$ that does not admit length $2$ 
 degree-minimizing paths and let $S^* = \semi(A', B, E)$ be an optimal semi-matching in $G|_{A' \times B}$.
Then $\exists A'' \subseteq A'$ with $|A''| \ge 1/2 |A'|$ such that
\begin{enumerate}
 \item $\degmax S|_{A'' \times B} \le \degmax S^*$, \label{item:01}
 \item $S|_{A' \setminus A'' \times B}$ is a semi-matching of $G|_{A' \setminus A'' \times B}$ and it
 does not admit length $2$ degree-minimizing paths.  \label{item:02}
\end{enumerate}
\end{lemma}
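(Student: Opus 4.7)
The plan is to use the maximal-matching decomposition $M_1, \dots, M_d$ of $S$ supplied by Lemma~\ref{lemma:no-length-2-deg-min-path}, and to set $A'' = \bigcup_{i \le d^*} A(M_i)$, where $d^* = \degmax S^*$ (if $d \le d^*$ we simply take $A'' = A'$ and the lemma is immediate). Condition~(1) is then direct, since $S|_{A'' \times B}$ is exactly $M_1 \cup \dots \cup M_{d^*}$, and each $b \in B$ receives at most one edge per $M_i$.

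For condition~(2), the restriction $S' := S|_{(A' \setminus A'') \times B}$ is a semi-matching of the restricted graph because every $a \in A' \setminus A''$ retains its unique $S$-edge. To show the absence of length-$2$ degree-minimizing paths I plan to argue by contradiction: the restriction leaves $\deg_{S'}(b) = 0$ for $b \notin B_+ := \{b : \deg_S(b) > d^*\}$ and sets $\deg_{S'}(b) = \deg_S(b) - d^*$ for $b \in B_+$. Hence a candidate path $(b_1, a, b_2)$ in $S'$ satisfies $\deg_S(b_1) > d^* \ge \deg_S(b_2)$ in the first case or $\deg_S(b_1) > \deg_S(b_2) > d^*$ in the second, and in both cases $(b_1, a, b_2)$ is already a length-$2$ degree-minimizing path in $S$, contradicting $S = \semi_2(A', B, E)$.

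The main step is the size bound $|A''| \ge |A'|/2$, and it is where I plan to exploit the interaction between the semi$_2$ property of $S$ and the structure of $S^*$. For every $a \in T := A' \setminus A''$ the edge $(a, S(a))$ lies in some $M_j$ with $j > d^*$, so $\deg_S(S(a)) \ge j > d^*$, and the absence of the length-$2$ path $(S(a), a, S^*(a))$ in $S$ then forces $\deg_S(S^*(a)) \ge \deg_S(S(a)) > d^*$, i.e., $S^*(a) \in B_+$. Consequently $|T| \le \sum_{b \in B_+} \deg_{S^*}(b) \le d^* |B_+|$. On the other hand, the decomposition guarantees that for every $b \in B_+$ the labels $1, \dots, d^*$ pick out $d^*$ distinct vertices of $A''$ matched to $b$, giving $|A''| \ge d^* |B_+| \ge |T|$, and hence $|A''| \ge |A'|/2$. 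I expect the only real care required is in the bookkeeping of the labels of the decomposition on both sides of the inequality $d^*|B_+|$; once that is made precise, the rest reduces to a single application of the semi$_2$ property at the vertex $S(a)$.
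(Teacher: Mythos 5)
Your proof is essentially the same as the paper's: same $A''$, same counting via the number of $B$-nodes with $S$-degree at least $d^*$, same maximal-matching decomposition. The only cosmetic difference is that you derive $\deg_S(S^*(a)) \ge d^*$ directly from the absence of the length-two path $(S(a),a,S^*(a))$, whereas the paper derives it from the maximality of $M_{\deg_S(S^*(a))+1}$ (itself a consequence of the same property). One small imprecision: with the ``degree at least two smaller'' convention for a length-two degree-minimizing path, the absence of $(S(a),a,S^*(a))$ gives only $\deg_S(S^*(a)) \ge \deg_S(S(a)) - 1 \ge d^*$, not $> d^*$; so you should count over $B_{\ge} := \{b : \deg_S(b) \ge d^*\}$ rather than $B_+$. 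The inequalities $|T| \le d^*\,|B_{\ge}|$ and $|A''| \ge d^*\,|B_{\ge}|$ still hold, so the conclusion is unaffected.
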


\begin{proof}
Let $d = \degmax S$ and let $d^* = \degmax S^*$. Partition $S$ into matchings $M_1, \dots, M_d$ as in Lemma~\ref{lemma:no-length-2-deg-min-path}.
We will show that $A'' = \bigcup_{i \le d^*} A(M_i)$ fulfills Item~\ref{item:01} and Item~\ref{item:02} of the Lemma. 

We have to show that $|A''| \ge 1/2 |A'|$. Let $A''' = A' \setminus A''$ and let $(a, b) \in S^*$ be an edge such that
$a \in A'''$. We argue now, that $\deg_S(b) \ge d^*$. 

Suppose for the sake of a contradiction that  $\deg_S(b) < d^*$. Then
$(a, b)$ could have been added to some matching $M_j$ with $j \le d^*$. Since by Lemma~\ref{lemma:no-length-2-deg-min-path} 
all $M_i$ are maximal, we obtain a contradiction and this proves that $\deg_S(b) \ge d^*$. 

This implies further that 
$|A''| \ge d^* \cdot | B(S^*|_{A''' \times B}) | \ge d^* \cdot |A'''|/d^* = |A'''|$,
where the last inequality comes from the fact that a node $b \in B(S^*|_{A''' \times B})$ has at most $d^*$ edges incident in $S^*$.
Since $A'''$ and $A''$ form a partition of $A'$, we obtain $|A''| \ge 1/2 |A'|$.

Since $A'' = A(S|_{A'' \times B})$ and $S|_{A'' \times B}$ is a set of $d^*$ matchings, Item~\ref{item:01} is trivially true.
Concerning Item~\ref{item:02}, note that if $S|_{A' \setminus A'' \times B}$ admitted a length $2$ degree-minimizing path, then that path
would also be a degree-minimizing path in $S$ contradicting the premise that $S$ does not admit a length $2$ degree-minimizing path. \qed
\end{proof}

\begin{theorem} \label{theorem:log-n-approximation}
 Let $S = \semi_2(A, B, E)$ be a semi-matching of $G$ that does not admit a length $2$ degree-minimizing path. Let $S^*$ be an optimal 
semi-matching in $G$. Then:
\begin{eqnarray*}
 \degmax S \le \lceil \log(n + 1) \rceil \degmax S^*.
\end{eqnarray*}
\end{theorem}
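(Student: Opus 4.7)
The plan is to iterate Lemma~\ref{lemma:matching-half-of-the-nodes} until all of $A$ has been peeled off, and then bound the max degree of $S$ by summing the per-iteration contributions. Let $d^* = \degmax S^*$. I would define inductively $A^{(0)} = A$ and, once $A^{(i)}$ is constructed, apply Lemma~\ref{lemma:matching-half-of-the-nodes} to the restriction $S|_{A^{(i)} \times B}$ (which by induction is a semi-matching of $G|_{A^{(i)} \times B}$ that does not admit length-$2$ degree-minimizing paths). This yields a subset $A''^{(i)} \subseteq A^{(i)}$ with $|A''^{(i)}| \ge \tfrac{1}{2}|A^{(i)}|$ such that $\degmax S|_{A''^{(i)} \times B} \le \degmax \semi(A^{(i)}, B, E)$, and such that $S|_{(A^{(i)} \setminus A''^{(i)}) \times B}$ still has no length-$2$ degree-minimizing path, allowing me to define $A^{(i+1)} = A^{(i)} \setminus A''^{(i)}$ and continue.

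The next observation is the monotonicity bound $\degmax \semi(A^{(i)}, B, E) \le d^*$. This holds because the restriction of $S^*$ to $A^{(i)}$ is a semi-matching of $(A^{(i)}, B, E)$ with max degree at most $d^*$, so an optimal semi-matching for $A^{(i)}$ is no worse. Consequently each piece $S|_{A''^{(i)} \times B}$ contributes at most $d^*$ to the degree of any $b \in B$.

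Since $|A^{(i+1)}| \le |A^{(i)}|/2$, after $k = \lceil \log(n+1) \rceil$ iterations we have $|A^{(k)}| \le n/2^k < 1$, hence $A^{(k)} = \varnothing$. This means the sets $A''^{(0)}, \dots, A''^{(k-1)}$ form a partition of $A$, so $S = \bigcup_{i=0}^{k-1} S|_{A''^{(i)} \times B}$ and every $b \in B$ satisfies
\begin{equation*}
\deg_S(b) \;=\; \sum_{i=0}^{k-1} \deg_{S|_{A''^{(i)} \times B}}(b) \;\le\; k \cdot d^* \;=\; \lceil \log(n+1) \rceil \degmax S^*,
\end{equation*}
which gives the theorem.

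The only subtle step is checking that the induction hypothesis propagates, i.e.\ that Item~\ref{item:02} of Lemma~\ref{lemma:matching-half-of-the-nodes} really lets us re-apply the lemma at each stage; this is immediate since the lemma's hypothesis is exactly what Item~\ref{item:02} guarantees. The remaining estimates (halving bound on $|A^{(i)}|$ and additivity of degrees over the partition) are routine.
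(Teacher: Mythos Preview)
Your proof is correct and follows essentially the same route as the paper's: iterate Lemma~\ref{lemma:matching-half-of-the-nodes}, peel off half of the remaining $A$-vertices at each step, and sum the at most $\lceil \log(n+1) \rceil$ contributions of $d^*$ to bound $\degmax S$. The only notable difference is that you make the monotonicity step $\degmax \semi(A^{(i)}, B, E) \le d^*$ explicit (via restricting $S^*$), whereas the paper uses it tacitly when invoking Item~\ref{item:01}; this is a clarification, not a different argument.
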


\begin{proof}
 We construct a sequence of vertex sets $(A_i)$ and a sequence of semi-matchings $(S_i)$ as follows. 
 Let $A_1 = A$, and let $S_1 = S$. For any $i$, $S_i$ will be a semi-matching in the graph $G|_{A_i \times B}$ and 
it will not admit length $2$ degree-minimizing paths. 

We construct $A_{i+1}$ and $S_{i+1}$  from $A_i$ and $S_i$ as follows. By Item~\ref{item:01} of Lemma~\ref{lemma:matching-half-of-the-nodes},
 there is a subset $A'_i \subseteq A_i$ of size at least $1/2 |A_i|$ such that $S_i|_{A'_i \times B}$ has maximal degree
 $d^*$. Let $A_{i+1} = A_i \setminus A'_i$, and let $S_{i+1} = S_i|_{A_{i+1} \times B}$. By Item~\ref{item:02} of Lemma~\ref{lemma:matching-half-of-the-nodes}, 
$S_{i+1}$ does not comprise length $2$ degree-minimizing paths in the graph $G|_{A_{i+1} \times B}$. We stop this construction at iteration
$l$ when $A'_l = A_l$ occurs. 

Note that $S = \bigcup_i S_i|_{A'_i \times B}$ and hence $\degmax S \le \sum_{i=1}^ l \degmax S_i|_{A'_i \times B} \le l \cdot d^*$.
It remains to argue that $l \le \log(n) + 1$. Since $|A'_i| \ge 1/2 |A_i|$ and $A_{i+1} = A_i \setminus A'_i$, we 
have $|A_{i+1}| \le 1/2 |A_i|$. Since $|A_1| = n$, we have $|A_i| \le (\frac{1}{2})^{i-1} n$. Then, 
$|A_{\lceil \log(n + 1) \rceil}| < 1$ which implies that $|A_{\lceil \log(n + 1) \rceil}| = 0$. 
We obtain hence $l \le \lceil \log(n + 1) \rceil$, which proves the theorem. \qed
\end{proof}

\bibliography{semi-matching-long}

\end{document}